\newtheorem{theorem}{Theorem}[section]
\newtheorem{lemma}[theorem]{Lemma}
\newtheorem{proposition}{Proposition}
\theoremstyle{definition}
\newtheorem{definition}[theorem]{Definition}
\newcommand{\A}{{\mathcal A}}
\newcommand{\Ac}{{\mathfrak{A}}}
\newcommand{\B}{{\mathcal B}}
\newcommand{\Ro}{{\mathcal R}}
\newcommand{\e}{\epsilon}
\newcommand{\lag}{\mathfrak{g}}
\def\fpd#1#2{\frac{\partial #1}{\partial #2}}
\newcommand{\F}{\mathbb{F}}
\DeclareMathOperator{\rank}{rank}
\newcommand{\actie}{\Phi}
\title{Aspects of reduction and transformation of Lagrangian systems with symmetry}
\author{ E. Garc\'{\i}a-Tora\~{n}o Andr\'{e}s$^{a,}$\footnote{Email: eduardo.gtoranoandres@ugent.be}, \ B. Langerock$^{a,b,c,}$\footnote{Email: bavo.langerock@ugent.be}\ and F. Cantrijn$^{a,}$\footnote{Email: frans.cantrijn@ugent.be}
\\[1.5\parskip]$^{a}$ Department of Mathematics, Ghent University\\ Krijgslaan 281 S22, B9000 Ghent, Belgium\\[1.5\parskip]
 $^{b}$ Department of Mathematics, KU~Leuven\\ Celestijnenlaan 200B, B3001 Leuven, Belgium\\[1.5\parskip]
$^{c}$ Belgian Institute for Space Aeronomy\\ Ringlaan 3, B1180 Brussels, Belgium\\
}
\date{}
\begin{document}
\maketitle\vspace{-.8cm}
\begin{abstract}
This paper contains results on geometric Routh reduction and it is a continuation of a previous paper~\cite{BEC} where a new class of transformations is introduced between Lagrangian systems obtained after Routh reduction. In general, these reduced Lagrangian systems have magnetic force terms and are singular in the sense that the Lagrangian does not depend on some velocity components. The main purpose of this paper is to show that the Routh reduction process itself is entirely captured by the application of such a new transformation on the initial Lagrangian system with symmetry.
\end{abstract}
{\em Key words:}\ {Symplectic reduction; Routh reduction; Lagrangian reduction; Reduction by stages}

{\em 2010 Mathematics Subject Classification:}\ 37J05; 37J15; 53D20

\section{Introduction}

A modern differential geometric treatment of Routh reduction for mechanical systems, as a Lagrangian analogue of Hamiltonian symplectic reduction, started in~\cite{doublependulum}. In that paper, a non-Abelian version of the classical reduction procedure of Routh was developed, thereby emphasizing the role of the magnetic or gyroscopic force term which appears after reduction. When taking this force term into account, the solutions of the Euler-Lagrange equations for the reduced Lagrangian, also called the Routhian, are the projections of those corresponding to the original Lagrangian. The definition of the Routhian involved the use of the {\sl mechanical connection}, which is the natural connection induced by the kinetic energy and the symmetry, and is the one leading to a reduced Lagrangian of mechanical type (see~\cite{lom}). This work received a natural continuation in~\cite{jalna,marsdenrouth} where, in particular, the variational aspects of the theory were further studied.

A key result in the further development of the theory is the realization of the reduced space on which the Routhian is defined as a fiber product (a result sometimes referred to as the {\sl realization theorem}, see~\cite{jalna,marsdenrouth}). This result was originally proved for mechanical systems, but remains valid for arbitrary Lagrangians on which an additional regularity condition is imposed, as shown in~\cite{BC}. The generalization for non-mechanical Lagrangians of the Routh technique was also studied in~\cite{mestcram} from the perspective of the Euler-Lagrange vector field. A different approach, from the point of view of exterior differential systems, is to be found in~\cite{Morando}.


The results presented in this paper are best situated within an ongoing research project related to the reduction theory of Lagrangian systems with symmetry and, in particular, to the technique of Routh reduction. In previous papers, different aspects of geometric Routh reduction have been studied. In~\cite{BC} the close relationship between Routh reduction and symplectic reduction was demonstrated, and this lead, on the one hand, to a broadening of the framework for applying Routh reduction by incorporating the so-called quasi-invariant Lagrangian systems and, on the other hand, to a description of Routh reduction by stages~\cite{routhstages}. In~\cite{BM} the regularity condition on the momentum map that is typical for Routh reduction, is relaxed. Then, the class of `magnetic Lagrangian systems' was introduced in~\cite{routhstages}, in the context of Routh reduction by stages, and its characteristic property is precisely that it is closed under the procedure of Routh reduction. Finally, in~\cite{BEC} a special class of transformations between magnetic Lagrangian systems was introduced and, by applying such a transformation, the magnetic Lagrangian system obtained after reduction of a Lagrangian system with respect to a full semi-direct product symmetry group could be identified with the system obtained after reduction by an Abelian subgroup.

The purpose of this paper is twofold. Firstly, we revisit this new concept of transformation in a more general framework and formalize its definition.  Secondly, we demonstrate that the process of Routh reduction may be understood as the result of two steps: the application of such a transformation to the initial Lagrangian system with symmetry, followed by a trivial reduction. This indicates the importance of these transformations and sheds some new light on the geometric structure underlying Lagrangian systems with symmetry.

The paper is organized as follows. In Section~\ref{sec:background} we recall the basics about magnetic Lagrangian systems and we give a description of Routh reduction in this framework. We then study transformations between magnetic Lagrangian systems in Section~\ref{sec:transf}, and describe a special class of transformations preserving some geometric properties of reduced spaces, among others the reduced presymplectic structure. We conclude Section~\ref{sec:transf} by giving the explicit transformation that corresponds to a restriction on the level set of the momentum map in the case of a standard Lagrangian system. In Section~\ref{sec:transrouth} we complete the picture by describing a general transformation for an arbitrary magnetic Lagrangian system, including a reduction step. Finally, in Section~\ref{sec:Ham} we briefly discuss an analogue of these transformations in a Hamiltonian framework.

\section{Background and notations}\label{sec:background}
\paragraph{\bf Group actions and bundles.} Throughout this paper $\actie^M: G\times M\to M$ denotes a left action of the Lie group $G$ on the manifold $M$, and we will use the shorthand notations $\actie^M(g,m)= gm=g\cdot m$. As usual, we let $\lag$ denote the Lie algebra of $G$ and $Ad$ the adjoint action of $G$ on $\lag$. The infinitesimal action of $\Phi^M$ on $\lag$, referred to as the infinitesimal generator map, is
\begin{align*}
\sigma^M_m:\lag & \to T_mM, \\
\xi & \mapsto \xi_M(m):=d/d\e|_0 (\exp \e\xi m),
\end{align*}
where $\exp:\lag \to G$ is the exponential map. $M$ can be naturally fibered over the orbit space $M/G=\{[m]_G:m\in M\}$ (here $[m]_G$ is the $G$-orbit through $m$) via $\pi:M\to M/G; m\to [m]_G$. It is well known that under the assumption that the action is free and proper, $\pi:M\to M/G$ has the structure of a principal $G$-bundle (see~\cite{koba}). From now on, unless otherwise stated, all actions are assumed to satisfy these requirements.

When a fiber bundle $\e:P\to Q$ is given, the fiber products $TQ\times_Q P$ and $T^*Q\times_Q P$ will be abbreviated by $T_PQ$ and $T^*_PQ$ respectively. We shall denote points on $T_PQ$ by $(v_q,p)$, where $v_q\in T_qQ$ and $p\in P$ is such that $\e(p)=q$ (and, in the same way, $(\alpha_q,p)$ denotes an arbitrary point in $T^*_PQ$). The contraction of an element $(\alpha_q,p)\in T^*_PQ$ with $(v_q,p)\in T_PQ$ is defined as $\langle (\alpha_q,p),(v_q,p) \rangle :=\langle \alpha_q,v_q\rangle$.
\begin{definition}  Let $\e:P\to Q$ be a fiber bundle.
\begin{enumerate}
\item $\tau_1:T_PQ \to TQ$ is the projection that maps $(v_q,p )\in T_PQ$ onto $v_q\in TQ.$
\item $\tau_2: T_PQ \to P$ is the projection that maps $(v_q,p)\in T_PQ$ onto $p\in P.$
\item $\pi_1:T_P^*Q \to T^*Q$ is the projection that maps $(\alpha_q,p )\in T_P^*Q$ onto $\alpha_q\in T^*Q.$
\item $\pi_2: T_P^*Q \to P$ is the projection that maps $(\alpha_q,p)\in T_P^*Q$ onto $p\in P.$
\end{enumerate}
\end{definition}
In agreement with the previous definition, when working with several bundles $\e^{(i)}:P_i\to Q_i$ we let $\tau_1^{(i)}$ and $\tau_2^{(i)}$ (respectively, $\pi_1^{(i)}$ and $\pi_2^{(i)}$) denote the corresponding projection maps of $T_{P_i}Q_i$ (resp. $T^*_{P_i}Q_i$).

The vertical tangent space to the fibration $\e$ at the point $p\in P$ is $V_p\e:=\ker{T_p\e}$. A connection $\A$ on the fiber bundle $\e:P\to Q$ is a $V\e$-valued 1-form $\A$ on $P$ such that $\A(v_p)=v_p$, for all $v_p$ in $V\e$. In the case of a principal fiber bundle $\e:P\to P/G$ with structure group $G$ a connection $\A$ determines a $\lag$-valued 1-form $\Ac$ on $P$ such that $\sigma^{P}\circ \Ac=\A$, and as a consequence $\Ac ( \xi_P)=\xi$, for all $\xi\in\lag$. A connection is called a principal connection if in addition $\A$ is equivariant, or $\big(\Phi^*_g\Ac\big)(v_p) = Ad_g \cdot \Ac(v_p)$, for all $g\in G$ and $v_p\in TP$. For any element $\mu\in\lag^*$, $\Ac_\mu$ denotes a 1-form on $P$ obtained by contraction of $\mu$ and the values of principal connection $\Ac$ on the level of the Lie algebra: $\Ac_\mu(v)=\langle \Ac(v),\mu \rangle $.

If we are given a chain of bundle structures $P\xrightarrow{\e_1} Q\xrightarrow{\e_2} R$, a connection on $\e_2:Q\to R$ induces a $V{\e_2}$-valued 1-form on $P$ defined by $\A_P(v_p):=\A(T_p\e_1(v_p))$. If, additionally, $\e_2:Q\to Q/G$ is a principal fiber bundle, we can construct a $\lag$-valued 1-form on $P$ as follows: $\Ac_P(v_p):=\Ac(T_p\e_1(v_p))$. Following the previous convention, we denote $\Ac_\mu(v)=\langle \Ac_P(v),\mu \rangle $ its contraction with a given $\mu\in\lag^*$.

\paragraph{Magnetic Lagrangian systems.} As mentioned in the introduction, magnetic Lagrangian systems appear naturally when applying Routh reduction to a Lagrangian system with symmetry. We now give a general definition of such a system.

\begin{definition} A magnetic Lagrangian system is a triple $(\e:P\to Q, L, \B)$ where $\e:P\to Q$ is a fiber bundle, $L$ is a smooth function on the fiber product $T_PQ$ and $\B$ is a closed 2-form on $P$.
\end{definition}
$P$ is thereby playing the role of configuration space of the system, $L$ is called the Lagrangian and $\B$ is referred to as the magnetic 2-form.  The standard notions of Legendre transformation and of energy, carry over to this new setting in a straightforward way:

\begin{definition} Let $(\e:P\to Q, L, \B)$ be a magnetic Lagrangian system. Then:
\begin{enumerate}
\item  The fiber derivative of $L$ is the map $\F L:T_PQ \to T_P^*Q$ sending $(v_q,p)\in T_PQ$ into $(\alpha_q,p)\in T^*_PQ$, where $\alpha_q\in T^*_qQ$ is (uniquely) determined by the relation
\[
\langle \alpha_q, w_q\rangle = \left.\frac{d}{du}\right|_{u=0} L(v_q+uw_q,p), \mbox{ for all } w_{q}\in T_{q}Q
\]
\item The energy is the function on $T_PQ$ defined by $E_L(v_q,p) =\langle \F L (v_q,p), (v_q,p)\rangle - L(v_q,p)$.
\end{enumerate}
\end{definition}

There is a natural way to construct a closed 2-form $\Omega^{L,B}$ on $T_PQ$ providing a generalization of the classical notion of Poincar\'e-Cartan 2-form, i.e.
\[
\Omega^{L,\B}:=\F L^*(\pi^*_1\omega_{Q} +\pi^*_2\B)\,.
\]
For later use, we give the coordinate expression of this 2-form. We will work with coordinates adapted to the fibration $\e$. If $(q^i)$ are local coordinates on $Q$, $(q^i,p^a)$ will denote bundle-adapted coordinates on $P$. Hence, on $T_PQ$ we have coordinates $(q^i,v^i,p^a)$ ($v^{i}$ denotes the ${i}$-th velocity component $\dot q^i$), and the Lagrangian $L$ is a function dependent of $(q^i,v^i,p^a)$. Straightforward computations show:
\begin{equation} \label{eq:symform}
\Omega^{L,\B} = d\left(\fpd{L}{v^i} \right)\wedge d q^i + \frac12 \B_{ij} dq^i \wedge dq^j + \B_{ia}dq^i\wedge dp^a + \frac12 \B_{ab} dp^a\wedge dp^b.
\end{equation}
In the case where the 2-form $\pi^*_1\omega_{Q} +\pi^*_2\B$ on $T^*_PQ$ is symplectic and $\F L$ is a (global) diffeomorphism, we say that the magnetic Lagrangian system $(\e:P\to Q, L, \B)$ is (hyper)regular. This guarantees that $\Omega^{L,B}$ is a symplectic form.

A curve $p(t)\in P$ induces a curve on $T_PQ$, namely $\gamma(t):=(\dot q(t), p(t))$, where $q(t) = \epsilon(p(t))$. The curve $p(t)$ is said to be a solution of the Euler-Lagrange (EL) equations iff $\gamma(t)$ satisfies the equation \begin{equation}\label{eq:EL} i_{\dot \gamma(t)} \Omega^{L,\B} (\gamma(t)) = -dE_L(\gamma(t)).\end{equation} The reader is referred to~\cite{routhstages} for a coordinate expression of these EL equations. We conclude with two remarks:
\begin{itemize}
\item The above definition of a magnetic Lagrangian system incorporates the standard concept of a Lagrangian system on a manifold $Q$ by letting $P=Q$, $\e$ the identity and $\B=0$.
\item The Lagrangian $L$ of any magnetic Lagrangian system can be pulled-back to a function on $TP$ and determines there a standard Lagrangian system with a magnetic force term, whose EL equations are precisely \eqref{eq:EL}. Since this $L$ on $TP$ does not depend on all velocity components, it is a singular Lagrangian by construction.
\end{itemize}
Hyperregular magnetic Lagrangian systems, although singular from a classical point of view, are amenable to a symplectic description, and it is therefore possible to study their dynamics using a Hamiltonian framework.

\paragraph{Routh reduction.} Assume a hyperregular standard Lagrangian system $(P=Q,L,\B=0)$ is given, and let $\Omega_L:=\F L^*\omega_Q$ be its Poincar\'e-Cartan symplectic form. Let $\Phi^Q=\Phi$ be a free and proper $G$-action on $Q$ such that $L$ is invariant with respect to its tangent lift $T\Phi$, and let $J_L$ be the equivariant momentum map $J_L : TQ\to \lag^*$ for $T\Phi$:
\[
\langle J_L(v_q),\xi\rangle = \langle \F L(v_q) , \xi_{Q}(q)\rangle.
\]
We then know that $J_{L}$ is constant along solutions of the EL equations. In particular, fixing a regular value $\mu\in\lag^*$ of the momentum map, $X_{E_L}$ leaves $J_L^{-1}(\mu)$ invariant. Moreover, equivariance of $J_L$ implies that the action of $G_\mu$ on $TQ$ restricts to a (free and proper)  action on $J_L^{-1}(\mu)$. It can be shown that, under a certain regularity condition on the Lagrangian $L$, we can realize the orbit space $J_L^{-1}(\mu)/G_{\mu}$ as the fiber product $T_{Q/G_{\mu}}(Q/G)$. It should be noted that the condition of $\mu$ being a regular value is redundant once freeness of the action is assumed.
\begin{definition} A $G$-invariant Lagrangian $L$ is called~\emph{$G$-regular} if for every fixed $v_q \in TQ$ the map $\lag \to \lag^*, \xi \mapsto J_L(v_q+\xi_Q(q))$ is a diffeomorphism.
\end{definition}
From now on, we will assume that $L$ is $G$-regular. Making use of the $G$-regularity, one can then construct a diffeomorphism $\Pi_{\mu}:J_L^{-1}(\mu)/G_{\mu}\to T_{Q/G_{\mu}}(Q/G)$ (see \cite{BEC}, Lemma~1).

Next, fix a connection $\A$ in the principal bundle $\pi: Q \rightarrow Q/G$, and let, as before, $\Ac$ be the associated $\lag$-valued 1-form. The 2-form $d\Ac_\mu$ is easily checked to be projectable to a 2-form $\mathfrak{B}_\mu$ on $Q/G_\mu$. Define the following function on $TQ$:
\[
R^\mu=L-\Ac_\mu,
\]
(although $\Ac_\mu$ is a 1-form, in the above definition it is understood to be the function $\Ac_\mu: TQ \to \mathbb{R}; v_q \mapsto  \Ac_\mu (q)(v_q )$). Due to $G_\mu$-invariance, the restriction of $R^\mu$ to $J_{L}^{-1}(\mu)$ induces a function on the quotient $J^{-1}_L(\mu)/G_\mu\cong T_{Q/G_\mu}(Q/G)$. $\Ro^\mu$ denotes the corresponding function on $T_{Q/G_\mu}(Q/G)$ and is called the {\em Routhian}.

\begin{proposition}[Routh reduction]\label{prop:normalrouth} Let $L$ be a hyperregular $G$-invariant, $G$-regular Lagrangian with configuration space $Q$, and let $\mu \in \lag^*$ denote a regular value of the momentum map $J_L$. Then, the magnetic Lagrangian system $(Q/G_\mu\to Q/G, \Ro^{\mu},\mathfrak{B}_\mu)$, as constructed above, has the property that every solution of the original Euler-Lagrange equations corresponding to the momentum value $\mu$ projects onto a solution of $(Q/G_\mu\to Q/G, \Ro^{\mu},\mathfrak{B}_\mu)$. Conversely, every solution in $Q/G_\mu$ of  $(Q/G_\mu\to Q/G, \Ro^{\mu},\mathfrak{B}_\mu)$ is the projection of a solution to the Euler-Lagrange equations for $L$ with momentum $\mu$.
\end{proposition}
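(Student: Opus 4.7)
My plan is to deduce the proposition from the Marsden--Weinstein symplectic reduction theorem combined with the realization diffeomorphism $\Pi_\mu$ of \cite{BEC}. By hyperregularity, $(TQ,\Omega_L)$ is a symplectic manifold on which $G$ acts symplectically through the tangent lift, with equivariant momentum map $J_L$ and Hamiltonian $E_L$; solutions of the Euler--Lagrange equations correspond, via $\F L$, to integral curves of $X_{E_L}$, and these preserve $J_L^{-1}(\mu)$. Since $\mu$ is regular and $G_\mu$ acts freely and properly on the level set, symplectic reduction produces a reduced symplectic form $\omega_\mu$ and a reduced Hamiltonian $\bar{E}_\mu$ on $J_L^{-1}(\mu)/G_\mu$, and integral curves of $X_{E_L}$ in the level set project to integral curves of $X_{\bar{E}_\mu}$.

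The core of the argument is then to transport $(\omega_\mu,\bar{E}_\mu)$ across $\Pi_\mu$ and identify it with $(\Omega^{\Ro^\mu,\mathfrak{B}_\mu},E_{\Ro^\mu})$ on $T_{Q/G_\mu}(Q/G)$. The device I will use is the decomposition $L=R^\mu+\Ac_\mu$. Since $\Ac_\mu$ is fibrewise linear in velocities, its fibre derivative coincides with $\Ac_\mu$ regarded as a 1-form on $Q$, which yields
\[
\Omega_L \;=\; \Omega_{R^\mu} \;+\; \tau_Q^{*}\, d\Ac_\mu,
\]
where $\tau_Q\colon TQ\to Q$ is the tangent bundle projection. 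Restricting to $J_L^{-1}(\mu)$ and passing to the quotient, the second summand descends to the magnetic contribution $(\F\Ro^\mu)^{*}\pi_2^{*}\mathfrak{B}_\mu$ by the projectability of $d\Ac_\mu$ to $\mathfrak{B}_\mu$, while the first descends to $(\F\Ro^\mu)^{*}\pi_1^{*}\omega_{Q/G}$, because $R^\mu$ on the level set induces precisely $\Ro^\mu$ on the quotient. Assembling the two pieces reproduces the 2-form in~(\ref{eq:symform}). The identification of Hamiltonians is much easier: a direct calculation exploiting the velocity-linearity of $\Ac_\mu$ gives $E_L=E_{R^\mu}$ on all of $TQ$, and after $G_\mu$-reduction this descends to $E_{\Ro^\mu}$.

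With both identifications in hand, the statement follows: a solution of the original Euler--Lagrange equations with momentum $\mu$ is an integral curve of $X_{E_L}$ contained in $J_L^{-1}(\mu)$, which $\Pi_\mu$-projects onto an integral curve of $X_{E_{\Ro^\mu}}$, i.e.\ a solution of the reduced magnetic Lagrangian system; the converse follows by the standard horizontal reconstruction of the dropped $G_\mu$-orbit direction. The main obstacle I expect is the careful bookkeeping in the symplectic identification: one must verify how $\Pi_\mu$ interacts with the principal connection $\A$ (which enters both $\Ac_\mu$ and $\mathfrak{B}_\mu$) and confirm that the horizontal/vertical splitting of $TQ$ along the level set is correctly transported to the fibred structure of $T_{Q/G_\mu}(Q/G)$. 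In bundle-adapted coordinates this amounts to matching each of the components $\B_{ij},\B_{ia},\B_{ab}$ appearing in~(\ref{eq:symform}) against the corresponding pieces of $d\Ac_\mu$, which is routine but laborious.
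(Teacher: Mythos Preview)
The paper does not supply a proof of Proposition~\ref{prop:normalrouth} at the point where it is stated; it is recalled there as background from \cite{BC,routhstages}. Your approach---Marsden--Weinstein reduction of $(TQ,\Omega_L,E_L)$ at $\mu$, followed by identification of the reduced data via the decomposition $L=R^\mu+\Ac_\mu$ and the diffeomorphism $\Pi_\mu$---is correct and is essentially the argument given in those references.

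That said, the paper's own contribution amounts to a \emph{different} derivation of the same result, developed through the running ``Example (Routh reduction)'' in Sections~\ref{sec:transf}--\ref{sec:transrouth}: one first applies the compatible transformation $\psi_{L,\beta}=\imath_\mu\circ\Pi_\mu^{-1}$ to obtain the magnetic system $(Q\to Q/G,\,(\imath_\mu\circ\Pi_\mu^{-1})^*L-\Ac_\mu,\,d\Ac_\mu)$ (Theorem~\ref{thm:main} furnishes exactly the symplectic and energy identities you derive by hand from $\Omega_L=\Omega_{R^\mu}+\tau_Q^*d\Ac_\mu$ and $E_L=E_{R^\mu}$), then verifies the tangency condition of Proposition~\ref{prop:tangency} (this replaces your direct appeal to conservation of $J_L$), and finally performs a trivial fiberwise $G_\mu$-quotient. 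In short, you restrict to the level set and then quotient; the paper transforms first and quotients trivially afterwards. The two routes agree because $\psi_{L,\beta}$ is precisely the inverse of the restrict-and-realize map. Your argument is the more economical one for this isolated statement; the paper's detour buys a uniform machinery that also covers magnetic Lagrangians and reduction by stages.
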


\paragraph{Example.} Consider three planar rigid bodies with a common fixed point $O$, so that each body is free to rotate about the axis through $O$, orthogonal to the plane. The configuration space is $S^1 \times S^1\times S^1$, with coordinates  $(\theta,\varphi,\psi)$ where $\theta$ is the angle which the first body makes with a fixed direction in the plane, $\varphi$ is the relative angle of the second rigid body w.r.t. the first and finally $\psi$ denotes the relative angle of the third rigid body w.r.t. the second (see Figure~\ref{fig:ex}).
\begin{figure}[h]
\begin{center}
\includegraphics{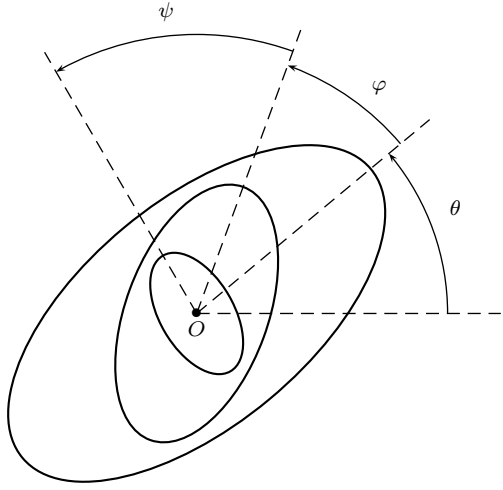}
\end{center}
\caption{Coordinates on $S^1\times S^1\times S^1$}\label{fig:ex}
\end{figure}


The potential is supposed to be of the form $V(\varphi,\psi)$ such that we have an $S^1$-invariant Lagrangian:
\[
L = \frac{1}{2}I_{1}\dot{\theta}^{2}+\frac{1}{2}I_2(\dot{\theta}+\dot{\varphi})^{2}
+\frac{1}{2}I_3(\dot{\theta}+\dot{\varphi}+\dot{\psi})^{2}-V(\varphi,\psi),
\]
whose EL equations (in normal form) are:
\[
\ddot{\theta} =  \frac{1}{I_1}\fpd{V}{\varphi}, \quad \ddot{\varphi} = -\left(\frac{I_1+I_2}{I_2I_1}\right)\fpd{V}{\varphi}+\frac{1}{I_2}\fpd{V}{\psi}, \quad \ddot{\psi} =  -\left(\frac{I_2+I_3}{I_3I_2}\right)\fpd{V}{\psi}+\frac{1}{I_2}\fpd{V}{\varphi}.
\]
Fix a regular value $\mu$ for the momentum $J=(I_1+I_2+I_3)\dot{\theta}+(I_2+I_3)\dot{\varphi}+I_3\dot{\psi}$. Then on the level set $\{J=\mu\}$ we have $\dot{\theta}=(\mu-(I_2+I_3)\dot{\varphi}+I_3\dot{\psi})/(I_1+I_2+I_3)$. We work out the Routhian for two different connections:
\begin{enumerate}[1)]
\item  Consider the {\sl mechanical connection} $\A^M$ whose horizontal spaces are orthogonal to the $G$-orbits with respect to the metric given by the kinetic energy. It is easy to check that one has the following connection 1-form: $\Ac^M=d\theta+\frac{I_2+I_3}{I_1+I_2+I_3}d\varphi+\frac{I_3}{I_1+I_2+I_3}d\psi$. The Routhian $R^\mu_M(\varphi,\dot\varphi,\psi,\dot\psi)=\big(L-
\langle\Ac^M,\mu\rangle\big)_{\{J=\mu\}}$ satisfies:
\begin{align*}
R^\mu_M\simeq & \frac{1}{2}\left[\frac{I_1\left(I_2+I_3\right)}{\left(I_1+I_2+I_3\right)}\right]\dot{\varphi}^{2}+
\frac{1}{2}\left[\frac{I_3\left(I_1+I_2\right)}{\left(I_1+I_2+I_3\right)}\right]\dot{\psi}^{2}\\
&+\left[\frac{I_1I_3}{\left(I_1+I_2+I_3\right)}\right]\dot{\varphi}\dot{\psi}-V(\varphi,\psi),
\end{align*}
where the symbol $\simeq$ means that we have omitted constant terms. Note that with this choice of the connection, the Routhian is again of mechanical type.
\item Take now the non-flat connection given by $\Ac^0=d\theta+\cos(\psi)d\varphi$. The Routhian $R^\mu_0(\varphi,\dot\varphi,\psi,\dot\psi)=\big(L-
\langle\Ac^0,\mu\rangle\big)_{\{J=\mu\}}$ satisfies:
\begin{align*}
R^\mu_0\simeq & \frac{1}{2}\left[\frac{I_1\left(I_2+I_3\right)}{\left(I_1+I_2+I_3\right)}\right]\dot{\varphi}^{2}+
\frac{1}{2}\left[\frac{I_3\left(I_1+I_2\right)}{\left(I_1+I_2+I_3\right)}\right]\dot{\psi}^{2}
+\left[\frac{I_1I_3}{\left(I_1+I_2+I_3\right)}\right]\dot{\varphi}\dot{\psi}\\
&+\frac{\mu\left(I_2+I_3\right)(1-\cos(\psi))}{\left(I_1+I_2+I_3\right)}\dot{\varphi}+\frac{\mu I_3}{\left(I_1+I_2+I_3\right)}\dot{\psi}-V(\varphi,\psi).
\end{align*}
An easy computation shows that the EL equations for any of $R^\mu_M$ or $R^\mu_0$ are equivalent to the EL equations for the variables $(\varphi,\psi)$ of $L$ (note that the EL equations for $R^\mu_0$ have a force term $d\Ac^0_\mu=\mu\sin(\psi)d\varphi\wedge  d\psi$). Together with the momentum equation, they provide complete solutions of the original system.
\end{enumerate}
This example illustrates an important fact about Routh reduction: the choice of the connection is arbitrary and always leads to the same EL equations.

\section{Transformations between magnetic Lagrangian systems}\label{sec:transf}

In this section we study transformations between magnetic Lagrangian systems. As described in the introduction, the main goal is to obtain the Routh reduction procedure as a transformation between magnetic Lagrangian systems. Throughout this section, we develop Routh reduction as an example of the general theory on these transformation.

\subsection{Pull-back Hamiltonian systems}\label{subset:pullback}
We first recall some generalities concerning the pull-back of a symplectic structure and investigate the relationship between Hamiltonian vector fields that are connected by such a pull-back operation.

Consider the situation where we are given two manifolds $N,M$ and a smooth map $f:N\to M$ of constant rank. Assume, in addition, that $M$ is a symplectic manifold with symplectic form $\omega_M$, and that we are given a Hamiltonian function $h_M$ on $M$. Let us denote by $X_{h_M}$ the corresponding Hamiltonian vector field, which satisfies $i_{X_{h_M}}\omega_M=-dh_M$. Consider then the presymplectic form $\omega_N = f^*\omega_M$ and the Hamiltonian function $h_N = f^*h_M$, induced on $N$. A Hamiltonian vector field on $N$ with respect to $\omega_N$, corresponding to $h_{N}$, is determined by the presymplectic equation
\begin{equation} \label{eq:presymp} i_{X} \omega_N = -dh_N,
\end{equation}
and we are interested in those cases where (some of) the integral curves of $X_{h_M}$ can be retrieved from integral curves of a solution to \eqref{eq:presymp}. More precisely, we investigate when $X_{h_{M}}$ is $f$-related to a solution $X$ of \eqref{eq:presymp}. Recall that solutions to \eqref{eq:presymp}, if they exist, are determined up to elements in the kernel of $\omega_{N}$, which we denote by $TN^{\omega_{N}}$, and that $Tf\left(TN^{\omega_N}\right)= [Tf(TN)]^{\omega_M}\cap Tf(TN)$, where $[Tf(TN)]^{\omega_M}$ is the kernel of the restriction of $\omega_M$ to $TM_{|f(N)}$.

First note that any vector field $Y$ on $N$ which is $f$-related to $X_{h_M}$, solves \eqref{eq:presymp}: for any $x \in N$ and $Z_x\in T_xN$ it follows that
\begin{align*}
\omega_N(x)\big(Y_x,Z_x\big)&=\omega_M(f(x))\big(Tf(Y_x),Tf(Z_x)\big) = -dh_M(f(x))\big(Tf(Z_x)\big)\\
&=-dh_N(x)(Z_x).
\end{align*}
A necessary condition for $X_{h_{M}}$ to be $f$-related to a vector field on $N$ is ${X_{h_M}}_{|f(N)}\in Tf(TN)$, or equivalently
\begin{equation} \label{eq:tan}
\left\langle dh_M,\left[Tf(TN)\right]^{\omega_M}\right\rangle _{|f(N)} = 0.
\end{equation}
If $X$ solves \eqref{eq:presymp} and condition \eqref{eq:tan} holds, the vector $Tf(X_x) - X_{h_{M}}(f(x))$ is in $Tf(T_xN^{\omega_{N}})$ for all $x$ in the domain of $X$, i.e. $X$ can be gauged by an element in the kernel of $\omega_{N}$ so that it becomes $f$-related to $X_{h_{M}}$. To show that \eqref{eq:tan} is also a sufficient condition for the existence of an $f$-related solution of \eqref{eq:presymp}, we need to show that it implies the existence of a solution \eqref{eq:presymp}. For that purpose, we rely on the presymplectic constraint algorithm developed by M. Gotay, J.M. Nester and G. Hinds (see~\cite{Cons_Hinds,gotaya}).

The starting point of the presymplectic constraint algorithm is the observation that \eqref{eq:presymp} admits a solution at a point $x \in N$ if the following condition holds: $\langle Z_x,dh_N(x)\rangle = 0$ for all $Z\in TN^{\omega_N}$. The set of all these points is assumed to form a (immersed) submanifold of $N$, i.e.
\[
N_2 = \{x \in N \,:\, \langle dh_N(x), T_xN^{\omega_N}\rangle = 0\}\,,
\]
called the secondary constraint submanifold. The next step then consists in requiring that one should be able to find a vector field solution to \eqref{eq:presymp} which is tangent to $N_2$. This possibly leads to new constraints defining a constraint submanifold $N_3 = \{x \in N_2 \,:\, \langle dh_{N}(x), T_xN_2^{\omega_N} \rangle = 0\}$, where $TN_2^{\omega_N}= \{X \in TN_{|N_2}\,:\, \omega_N(X,Y) = 0\;\mbox{for all}\; Y\in TN_2\}$. Proceeding this way one generates a descending sequence of constraint submanifolds
$\ldots \subset N_k \subset \ldots \subset N_2 \subset N:=N_1$, where
\[
N_k = \{x \in N_{k-1} \,:\, \langle dh_N(x), T_xN_{k-1}^{\omega_N}\rangle = 0\}
\]
for $k=2,\ldots$, with $TN_{k-1}^{\omega_N} =\{X \in TN_{|N_{k-1}}\,:\, \omega_N(X,Y) = 0\;\mbox{for all}\; Y\in TN_{k-1}\}$.  If this sequence stabilizes at some finite step $K \in \mathbb{N}$, in the sense that $N_{K} \neq \emptyset$ and $N_{K+1} = N_K$, we say that $N_K$ is the \emph{final constraint (sub-)manifold}.
In that case, equation~\eqref{eq:presymp} admits solutions on $N_K$, and we say that the presymplectic equation leads to a consistent dynamics on $N_K$.

Returning to the situation described above, we are now able to prove that \eqref{eq:presymp} admits a consistent dynamics on $N$ provided the Hamiltonian vector field $X_{h_M}$ is everywhere tangent to $f(N)$. In fact we have:

\begin{proposition} There exists a solution $X$ of \eqref{eq:presymp} which is $f$-related to $X_{h_{M}}$ if and only if  ${X_{h_M}}_{|f(N)} \in Tf(TN)$.
\end{proposition}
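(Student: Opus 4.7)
The forward direction is immediate from the very definition of $f$-relatedness: if $Tf\circ X = X_{h_M}\circ f$, then $X_{h_M}|_{f(N)}$ lies pointwise in the image of $Tf$. All the work is in the converse, and the strategy is laid out in the paragraphs just before the proposition: first produce \emph{any} solution $X$ of the presymplectic equation~\eqref{eq:presymp}, then correct it by a section of $TN^{\omega_N}$ so that the corrected vector field becomes $f$-related to $X_{h_M}$.

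To obtain a solution of~\eqref{eq:presymp} I would invoke the Gotay--Nester--Hinds algorithm and show it stabilizes at the very first step, in the sense that $N_2 = N$. For $x\in N$ and $Z\in T_xN^{\omega_N}$ one has
$\langle dh_N(x),Z\rangle = \langle dh_M(f(x)),Tf(Z)\rangle,$
and the identity $Tf(TN^{\omega_N})=[Tf(TN)]^{\omega_M}\cap Tf(TN)$ recalled in the preceding paragraph places $Tf(Z)$ inside $[Tf(TN)]^{\omega_M}$. The hypothesis ${X_{h_M}}_{|f(N)}\in Tf(TN)$ is, by nondegeneracy of $\omega_M$ and the double-orthogonal identity, equivalent to condition~\eqref{eq:tan}, which says precisely that $dh_M$ annihilates $[Tf(TN)]^{\omega_M}$ along $f(N)$. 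Combining these gives $\langle dh_N(x),Z\rangle = 0$, so $N_2=N$ and~\eqref{eq:presymp} admits solutions on all of $N$.

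For the gauging step, pick such an $X$ and set $\Delta(x):=Tf(X_x)-X_{h_M}(f(x))$. The hypothesis gives $\Delta(x)\in Tf(T_xN)$, while a direct computation using $i_X\omega_N=-dh_N$ and $\omega_N=f^*\omega_M$ shows $\omega_M(\Delta(x),Tf(W))=0$ for every $W\in T_xN$; consequently $\Delta(x)\in [Tf(T_xN)]^{\omega_M}\cap Tf(T_xN)=Tf(T_xN^{\omega_N})$. Lifting $\Delta$ to a smooth vector field $Y$ on $N$ with values in $TN^{\omega_N}$ satisfying $Tf\circ Y = \Delta$, the vector field $X-Y$ still solves~\eqref{eq:presymp} (because $Y$ is in the kernel of $\omega_N$) and by construction $Tf\circ(X-Y)=X_{h_M}\circ f$, which is the required $f$-relatedness.

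The delicate point, and what I expect to be the main obstacle, is the smoothness of the lift $Y$. Here one uses the constant-rank assumption on $f$: it guarantees that $\omega_N$ has constant rank, so $TN^{\omega_N}$ is a smooth subbundle of $TN$, and $Tf|_{TN^{\omega_N}}$ is a smooth vector bundle morphism onto $Tf(TN^{\omega_N})$; choosing locally a smooth complement to $\ker(Tf|_{TN^{\omega_N}})$ and patching with a partition of unity yields the required smooth $Y$. Everything else in the argument is formal bookkeeping with the orthogonal-complement identity already recorded in the paper.
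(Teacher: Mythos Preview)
Your argument follows essentially the same route as the paper's: show via the Gotay--Nester--Hinds algorithm that $N_2=N$ (so \eqref{eq:presymp} has solutions everywhere on $N$), then invoke the gauging observation recorded in the paragraph immediately preceding the proposition. The paper is in fact terser than you are---it dispatches the entire gauging step with ``according to a previous observation''---so your write-up is a faithful expansion of the intended proof.

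One small caveat on your smoothness discussion: the constant-rank hypothesis on $f$ does \emph{not} by itself force $\omega_N=f^*\omega_M$ to have constant rank (the rank of the pullback depends on how the image of $Tf$ sits relative to $\omega_M$, and this can vary even for an immersion into a symplectic manifold). The paper, however, does not address the smoothness of the gauged vector field at all, so this is not a discrepancy with the original; it is simply a point where both arguments tacitly assume a regularity that holds in the applications of interest.
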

\begin{proof} It suffices to check the first step of the presymplectic constraint algorithm. Indeed, from $Tf\left(TN^{\omega_N}\right)\subset [Tf(TN)]^{\omega_M}$ and using equation~\eqref{eq:tan} it follows that for all $x \in N$
\[
\langle dh_N (x), T_xN^{\omega_N} \rangle = \langle dh_M(f(x)), T_xf\left(T_xN^{\omega_N}\right) \rangle=0\,,
\]
proving that $N$ is the final constraint manifold for \eqref{eq:presymp} which therefore admits a solution. Hence, according to a previous observation, there also exists a solution which is $f$-related to $X_{h_M}$.
\end{proof}

\subsection{Compatible transformations}\label{subsec:compatibility}

We now specialize the symplectic framework given above to the case of interest in the study of magnetic Lagrangian systems, namely fiber products with (pre)symplectic structures of the form $\Omega^{L,\B}$.

\begin{definition} Let $\e^{(1)}:P_1\to Q_1$ and $\e^{(2)}:P_2\to Q_2$ be two fiber bundles. If $F:P_1\to P_2$ and $f:Q_2\to Q_1$ are two surjective submersions we say that the pair $(F,f)$ forms a \emph{transformation pair} between both bundles if the following equality holds:
\[
f \circ \e^{(2)} \circ F = \e^{(1)},
\]
and all the arrows in Figure~\ref{fig:diagram1} represent fiber bundles.
\begin{figure}[h]
\begin{center}
\includegraphics{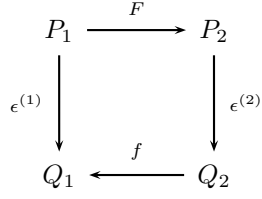}
\end{center}
\caption{Transformation pair}\label{fig:diagram1}
\end{figure}
\end{definition}

We write $\dim Q_i = n_i$ and $\dim P_i = n_i + k_i$ for $i=1,2$. Because $F$ and $f$ are submersions, it follows that $n_1 + k_1 \geq n_2 + k_2$ and $n_1 \leq n_2$. This way we find the relation $k_1 \geq k_2$ between the dimensions of the fibers of the bundles $\e^{(i)}:P_i\to Q_i$. A transformation pair induces a chain of bundle structures $P_1\to P_2\to Q_2\to Q_1$. Choosing coordinates adapted to these fibrations, we let $(q^i)$ denote coordinates on $Q_1$, $(q^i,\bar q^a)$ coordinates on $Q_2$, $(q^i,\bar q^a,\bar p^\alpha)$ on $P_2$ and finally $(q^i,\bar q^a,\bar p^\alpha,p^\gamma)$ on $P_1$. We have then the following natural sets of coordinates: $(q^i,v^i, \bar q^a,\bar p^\alpha, p^\gamma)$ on $T_{P_1}Q_1$  and $(q^i,\bar q^a,v^i,\bar v^a,\bar p^\alpha)$ on $T_{P_2}Q_2$.

\begin{definition} Let $(F,f)$ be a transformation pair between $\e^{(1)}:P_1\to Q_1$ and $\e^{(2)}:P_2\to Q_2$. Then:
\begin{enumerate}
\item  Two points $(v_{q_i},p_i)\in T_{P_i}Q_i$, $i=1,2$ are {\em $(F,f)$-compatible} if $F(p_1)=p_2$ and $Tf(v_{q_2})=v_{q_1}$.
\item A smooth map $\psi: T_{P_1}Q_1 \to T_{P_2}Q_2$ is {\em compatible} with the transformation pair $(F,f)$ if for every point $s_{1}=(v_{q_{1}},p_{1})\in T_{P_1}Q_1$, the points $s_{1}$ and $\psi(s_{1})$ are $(F,f)$-compatible.
\end{enumerate}
\end{definition}

We simply say that $\psi$ is a {\em compatible transformation} or {\em compatible map}. Compatibility for a map $\psi$ is equivalently specified by the following two conditions:
\begin{enumerate}[i)]
\item $\tau^{(2)}_2 \circ \psi = F \circ \tau^{(1)}_2$;
\item $Tf \circ \tau^{(2)}_1 \circ \psi = \tau^{(1)}_1$.
\end{enumerate}
The situation is summarized in Figure~\ref{fig:compa}:
\begin{figure}[H]
\begin{center}
\includegraphics{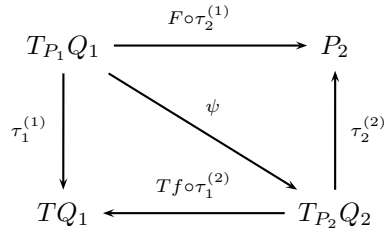}
\end{center}
\caption{Commutative diagram for a compatible map $\psi$}\label{fig:compa}
\end{figure}
We use coordinates adapted to the fibrations as introduced before to describe both a point and its image by $\psi$. It is then readily checked
that compatible maps convey to the following coordinate expression:
\[
\psi(q^i,v^i, \bar q^a,\bar p^\alpha, p^\gamma) = (q^i,\bar q^a,v^i,\bar v^a = \psi^a(q^i,v^i, \bar q^a,\bar p^\alpha, p^\gamma), \bar p^\alpha).
\]
Note that the rank of a transformation $\psi$ is determined by the rank of the matrix $(\partial\psi^a/\partial p^\gamma)_{a,\gamma}$ in the following way: $\rank \psi=\dim P_2+\dim Q_1+\rank (\partial\psi^a/\partial p^\gamma)_{a,\gamma}$. In particular, for $\psi$ to be a diffeomorphism the dimension of the fibers corresponding to $f$ and $F$ must the same and $\det(\partial\psi^a/\partial p^\gamma)_{a,\gamma}\neq0$.

The compatibility of points gives naturally a notion of compatibility of vectors by lifting the conditions to the tangent spaces:

\begin{definition} Let $(F,f)$ be a transformation pair between $\e^{(1)}:P_1\to Q_1$ and $\e^{(2)}:P_2\to Q_2$, and let $s_{1}=(v_{q_1},p_1)\in T_{P_1}Q_1$ and $s_{2}=(v_{q_2},p_2)\in T_{P_2}Q_2$ be arbitrary points. Given $Y_{s_{1}}$ and $X_{s_{2}}$ tangent vectors at $s_{1}$ and $s_{2}$ respectively, we say that $Y_{s_{1}}$ and $X_{s_{2}}$ are {\em $(F,f)$-compatible} if the following two conditions are satisfied:
\begin{enumerate}
\item $T\big(F\circ \tau^{(1)}_2\big)(Y_{s_{1}}) = T\tau^{(2)}_2 (X_{s_{2}})$;
\item $T\tau^{(1)}_1(Y_{s_{1}}) = T\big(Tf \circ \tau^{(2)}_1\big)(X_{s_{2}})$.
\end{enumerate}
\end{definition}
Note that, in particular, ${s_{1}}$ and $s_{2}$ need to be compatible points (see Figure~\ref{fig:diagram2}).
\vspace{.5cm}
\begin{figure}[thb]
\begin{center}
\includegraphics{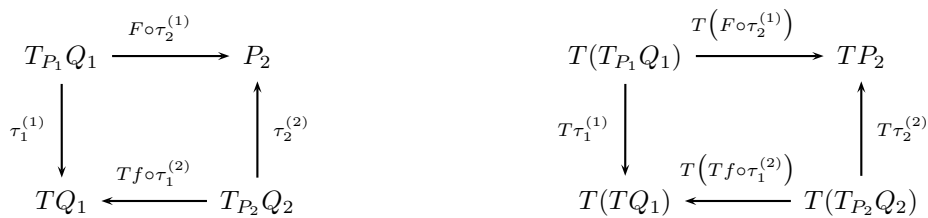}
\end{center}
\caption{Compatible points and vectors}\label{fig:diagram2}
\end{figure}

Consider an arbitrary vector $Y_{s_{1}}$ tangent to $T_{P_1}Q_1$ at the point $s_{1}$. Its coordinate expression is
\begin{equation}\label{eq:compavecy}
Y_{s_{1}} = Y_{s_{1}}^i \fpd{}{q^i} + Y_{s_{1}}^a \fpd{}{\bar q^a} + Y_{s_{1}}^\alpha \fpd{}{\bar p^\alpha} + Y_{s_{1}}^\gamma
\fpd{}{p^\gamma} + \hat Y^i_{s_{1}} \fpd{}{v^i},
\end{equation}
and reading the local expressions of the previous definition, a compatible tangent vector $X_{s_{2}}$ at the compatible point ${s_{2}}$ assumes the following form:
\begin{equation}\label{eq:compavecx}
X_{s_{2}}= Y_{s_{1}}^i \fpd{}{q^i} + Y_{s_{1}}^a \fpd{}{\bar q^a} + Y_{s_{1}}^\alpha \fpd{}{\bar p^\alpha}  + \hat Y^i_{s_{1}} \fpd{}{v^i} + \hat X_{s_{2}}^a\fpd{}{\bar v^a}.
\end{equation}

Given a compatible transformation $\psi$ between $\e^{(1)}:P_1\to Q_1$ and $\e^{(2)}:P_2\to Q_2$, it is clear that $Y_{s_{1}}$ and $X_{s_{2}} = T\psi(Y_{s_{1}})$ are compatible vectors for any $Y_{s_{1}}\in T_{P_1}Q_1$. In this particular case, from the coordinate expression of a compatible map, we find:
\[
\hat X_{s_{2}}^a = Y_{s_{1}}^i\fpd{\psi^a}{q^i} + \hat Y^i_{s_{1}}\fpd{\psi^a}{v^i} + Y_{s_{1}}^a\fpd{\psi^a}{\bar q^a} + Y_{s_{1}}^\alpha\fpd{\psi^a}{\bar p^\alpha} + Y_{s_{1}}^\gamma\fpd{\psi^a}{p^\gamma}.
\]
\paragraph{Example (Routh reduction).} Consider a hyperregular standard Lagrangian system $(Q\to Q,L,\B=0)$ amenable to Routh reduction, i.e. there is a left $G$-action and $L$ is $G$-invariant and $G$-regular. Consider the (trivial) bundles $\e^{(2)}=\mbox{id}_{Q} : P_{2}=Q\to Q_{2}=Q$ and $\e^{(1)}=\pi: P_{1}=Q\to Q_{1}=Q/G$ (Figure~\ref{fig:diagramTQ}). The maps $F=\mbox{id}_{Q}:P_{1}\to P_{2}$ and $f=\pi: Q_{2}=Q\to  Q_{1}=Q/G$ are a transformation pair between $\e^{(1)}$ and $\e^{(2)}$. Then $T_{P_{1}}Q_{1} = T_{Q}(Q/G)$, $T_{P_{2}}Q_{2} = TQ$ and it follows:
\begin{itemize}
\item points $(v_{[q]_{G}},q)$ and $v_{q}$ in $T_{Q}(Q/G)$ and $TQ$ respectively, are compatible if $v_{[q]_{G}} = T\pi(v_{q})$;
\item a map $\psi: T_{Q}(Q/G) \to TQ$ is compatible if it sends $(v_{[q]_{G}}, q)$ to a tangent vector in $TQ$ projectable to $v_{[q]_{G}}$, i.e. the map is determined up to a gauge in $\lag$;
\item tangent vectors $X\in T(TQ)$ and $Y=(Y^{Q},Y^{T(Q/G)})\in T(T_{Q}(Q/G))\cong TQ\times_{Q}T(T(Q/G))$ are compatible if  $T\tau_{Q}(X)=Y^{Q}$ and $T(T\pi)(X)=Y^{T(Q/G)}$.
\end{itemize}
\begin{figure}[thb]
\begin{center}
\includegraphics{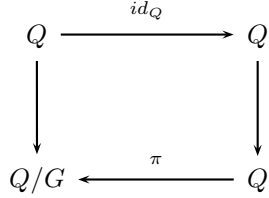}
\end{center}
\caption{Routh reduction scheme in $TQ$}\label{fig:diagramTQ}
\end{figure}

\paragraph{A family of compatible transformations.} Assume we are given two fiber bundles $\e^{(i)}: P_i \rightarrow Q_i$, $i=1,2$, and a magnetic Lagrangian system  $(\e^{(2)},L_2,\B_2)$, together with a transformation pair $(F,f)$ between $\e^{(1)}:P_1\to Q_1$ and $\e^{(2)}:P_2\to Q_2$. We can then construct a family of compatible transformations $\psi_{L_2,\beta}:T_{P_1}Q_1 \to T_{P_2}Q_2$ between these spaces. As the notation suggests, this family depends on the Lagrangian $L_{2}$ and an arbitrary map $\beta:P_{1}\to V^{*}f$, where $V^{*}f$ is the dual of the bundle $Vf$ of tangent vectors vertical to the fibration $f$.

First we introduce the notion of $f$-regularity of the Lagrangian $L_2$.
Consider the map $\alpha_{L_2}: T_{P_2}Q_2 \to V^*f$ which is defined as he composition of $\pi^{(2)}_1 \circ {\F L}_2: T_{P_2}Q_2 \to T^*Q_2$ with the projection of $T^*Q_2$ onto $V^*f$.
\begin{definition} The Lagrangian $L_2$ is {\em $f$-regular} if for any given $s_{2}=(v_{q_2},p_2) \in T_{P_2}Q_2$ the map $$\alpha^{s_{2}}_{L_2}: V_{q_2}f \to V^*_{q_2}f; w_{q_2} \mapsto \alpha_{L_2}(v_{q_2} + w_{q_2},p_2)$$ is a diffeomorphism.
\end{definition}
It is easily verified in coordinates that this condition is equivalent to the non-vanishing of the Hessian of $L_2$ with respect to the velocities, i.e.
\[
\det\left(\frac{\partial^2 L_2}{\partial {\bar  v}^a \partial {\bar  v}^b}\right) \neq 0.
\]
For $f$-regular Lagrangians, we are now ready to introduce a family of compatible maps $\psi_{L_2,\beta}:T_{P_1}Q_1\to T_{P_2}Q_2$.
\begin{enumerate}[I)]
\item Consider the map $\alpha_{L_2}: T_{P_2}Q_2 \to V^*f$, defined as above;
\item Fix a map $\beta: P_1 \to V^*f$ such that $f\circ pr_{|V^*f}\circ\beta=\e^{(1)}$, where $pr: T^{*}Q_{2}\to Q_{2}$ denotes the standard projection on the cotangent bundle $T^{*}Q_{2}$ (see also Figure~\ref{fig:diagrambeta});
\begin{figure}[htb]
\begin{center}
\includegraphics{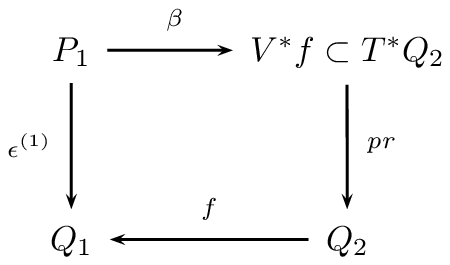}
\end{center}
\caption{Commutative diagram for the map $\beta$}\label{fig:diagrambeta}
\end{figure}
\item Let $s_{1}=(v_{q_1},p_1)$ be an arbitrary point in $T_{P_1}Q_1$  and let $s_{2}=(v_{q_2},p_2)\in T_{P_2}Q_2$ be a compatible point (such a point always exists).  Due to the $f$-regularity of $L_2$, there exists a unique tangent vector $w_{q_2}\in V_{q_2}f$ that satisfies $\alpha^{s_{2}}_{L_2}(w_{q_2}) = \beta(p_1)$, or alternatively
\[
\pi^{(2)}_1\big({\F L}_2{(v_{q_2}+w_{q_2},p_2)}\big)_{|Vf}= \beta(p_1).
\]
We take the point $(v_{q_2}+w_{q_2},p_2)$ as the image of $s_{1}=(v_{q_1},p_1)$ under $\psi_{L_2,\beta}$. The fact that $L_2$ is $f$-regular implies that the construction  is independent of the choice of $s_{2}$.
\end{enumerate}
By construction, the map $\psi_{L_2,\beta}$ is compatible and it satisfies $\alpha_{L_2} \circ \psi_{L_2,\beta} \equiv \beta \circ \tau_2^{(1)}$. In coordinates, writing $\beta=\beta_a d\bar q^a$, this relation takes the form:
\[
\fpd{L_2}{{\bar  v}^a}\bigg(q^i,\bar q^a,\bar p^\alpha,v^i,{\bar v}^a=\psi^a(q^i,\bar q^a,\bar p^\alpha,p^\gamma,v^i)\bigg) \equiv \beta_a(q^i,\bar q^a,\bar p^\alpha,p^\gamma).
\]

\begin{proposition}\label{pro:charac} $\psi_{L_2,\beta}$ is uniquely characterized by the following two conditions:
\begin{enumerate}
\item It is a compatible transformation;
\item It satisfies $\alpha_{L_2} \circ \psi_{L_2,\beta} \equiv \beta \circ \tau_2^{(1)}$.
\end{enumerate}
\begin{proof} Take $s_{1}\in T_{Q_1}P_1$ and let $\psi$ be a compatible map  that satisfies the relation above. This condition reads:
\[
\fpd{L_2}{{\bar  v}^a}\left(\psi(s_{1})\right) = \beta_a\big(\tau_2^{(1)}(s_{1})\big).
\]
If we use regularity of $\alpha_{L_2}$ and apply the inverse function theorem, it follows that $\psi$ is unique.
\end{proof}
\end{proposition}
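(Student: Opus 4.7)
The plan is to split the claim into existence and uniqueness. Existence is essentially free of content: the three-step construction preceding the statement produces, for each $s_1 \in T_{P_1}Q_1$, a point of $T_{P_2}Q_2$ which is $(F,f)$-compatible with $s_1$ (compatibility is preserved in step~III because the correction $w_{q_2}$ lies in $V_{q_2}f$ and therefore affects neither the $F$- nor the $Tf$-projections), and by definition $\alpha^{s_2}_{L_2}(w_{q_2}) = \beta(p_1)$, which is precisely condition~(2) at the point $s_1$. So the substantive content of the proposition is uniqueness.

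For uniqueness I would take an arbitrary smooth map $\psi:T_{P_1}Q_1\to T_{P_2}Q_2$ satisfying (1) and (2) and show that it agrees with $\psi_{L_2,\beta}$ pointwise. Condition~(1), together with the coordinate description of compatible maps given just before the proposition, forces
\[ \psi(q^i,v^i,\bar q^a,\bar p^\alpha,p^\gamma) = \bigl(q^i,\bar q^a,v^i,\psi^a(q^i,v^i,\bar q^a,\bar p^\alpha,p^\gamma),\bar p^\alpha\bigr), \]
so compatibility alone determines every coordinate of $\psi(s_1)$ except the functions $\psi^a$. Writing $\beta = \beta_a\,d\bar q^a$ and spelling out condition~(2) in these adapted coordinates yields the system
\[ \fpd{L_2}{\bar v^a}\bigl(q^i,\bar q^a,\bar p^\alpha,v^i,\psi^a(q^i,v^i,\bar q^a,\bar p^\alpha,p^\gamma)\bigr) = \beta_a(q^i,\bar q^a,\bar p^\alpha,p^\gamma), \]
a system of scalar equations for the unknowns $\psi^a$ parametrized by the remaining variables.

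I would then invoke $f$-regularity of $L_2$, which by the coordinate characterization stated immediately after the definition is exactly the non-vanishing of $\det(\partial^2 L_2/\partial \bar v^a\partial \bar v^b)$. This is precisely the invertibility of the Jacobian of the left-hand side of the displayed system with respect to the unknowns $\psi^a$, so the implicit function theorem produces a unique smooth solution for the $\psi^a$ in terms of the other coordinates. Since these unknowns are by construction the same as the $\bar v^a$-components produced in step~III of the construction of $\psi_{L_2,\beta}$ (where they arise as the unique fiber-vertical correction $w_{q_2}\in V_{q_2}f$ with $\alpha^{s_2}_{L_2}(w_{q_2}) = \beta(p_1)$), we conclude $\psi = \psi_{L_2,\beta}$.

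No serious obstacle is anticipated; the whole argument is a fiberwise inverse function theorem whose hypotheses are explicitly built into $f$-regularity. The only point deserving a careful word is the reduction of condition~(2) to the displayed scalar system: one must check that the other arguments of $\partial L_2/\partial \bar v^a$ are already pinned down by compatibility, so that the only dependence on the unknown is through the explicit $\bar v^a$-slot. This is immediate from the coordinate form of a compatible map written above, which makes the argument canonical despite its coordinate presentation.
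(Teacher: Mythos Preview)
Your proposal is correct and follows essentially the same route as the paper: reduce condition~(2) to the coordinate system $\partial L_2/\partial \bar v^a(\psi(s_1)) = \beta_a(\tau_2^{(1)}(s_1))$ and invoke $f$-regularity (the nondegenerate Hessian in the $\bar v$-directions) together with the inverse/implicit function theorem to pin down the only undetermined components $\psi^a$. Your write-up is somewhat more explicit than the paper's in separating existence from uniqueness and in spelling out why compatibility alone fixes every coordinate except $\psi^a$, but the underlying argument is the same.
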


\paragraph{Example (Routh reduction).} Recall the setup for a standard Lagrangian system on $Q$ amenable to Routh reduction:  $\e^{(1)}=\pi: P_{1}=Q\to Q_{1}=Q/G$, $\e^{(2)}=\mbox{id}_{Q} : P_{2}=Q\to Q_{2}=Q$, $F=\mbox{id}_{Q}$, $f=\pi: Q\to Q/G$ and $L_{2}=L$, $\B_{2}=0$. The bundle $Vf$ is the bundle of symmetry vectors $\{\xi_{Q_{2}} | \xi \in \lag\}$ and $\sigma^{*}\circ\alpha_{L} = J_{L}$. The map $\beta : Q \to V^{*}f \cong Q \times \lag^{*}$ is equivalent to a $\lag^{*}$-valued map on $Q$. Although we are running ahead of things, in the case of Routh reduction the map $\beta$ is determined from a fixed value $\mu\in\lag^{*}$. Indeed $\beta : Q\to V^*\pi $ is characterized in the following way:
\[
\langle \beta(q), \xi_Q(q)\rangle  = \langle \mu, \xi\rangle, \mbox{ for all } \xi\in\lag.
\]
Thus, the second equation in Proposition~\ref{pro:charac} coincides with the momentum equation $J_{L_{2}} = \mu$. From the definition of the map $\alpha_L$, the image of $(v_{[q]_G},q)$ by the map $\psi_{L,\beta}$ is the element $(v_q+\eta_Q)\in TQ$ with $\eta$ is determined by the equality:
\[
\langle \F L(v_q+\eta_Q), \xi_Q(q)\rangle  = \langle \beta(q), \xi_Q(q)\rangle  = \langle \mu, \xi\rangle,
\]
for all $\xi\in\lag$. From the definition of $J_{L}$ it follows $\langle \F L(v_q+\eta_Q), \xi_Q(q)\rangle =  \langle J_L(v_q+\eta_Q),\xi\rangle$ and hence $\psi_{L,\beta}=\imath_\mu\circ\Pi_{\mu}^{-1}$.

\paragraph{Pull-back of a magnetic Lagrangian system under $\psi_{L_{2},\beta}$.} In the next two paragraphs we study the pull-back  under $\psi_{L_{2},\beta}$ of the (pre)symplectic system $\Omega^{L_2,\B_2}$ with energy (Hamiltonian) $E_{L_2}$. In the first paragraph we show that the pull-back system is associated to a new magnetic Lagrangian system on $P_{1}\to Q_{1}$. In the second paragraph we study conditions on the map $\beta$ such that the EL equations of the pull-back system are related to the EL equations of the initial magnetic Lagrangian system.

In order to define in an intrinsic way a Lagrangian on $P_{1}\to Q_{1}$ whose associated 2-form equals $\psi_{L_2,\beta}^*\Omega^{L_2,\B_2}$, we choose a connection $\A$ on the bundle $f: Q_2 \to Q_1$. Recall from the introduction that $\A$ is a $Vf$-valued 1-form on $Q_2$, satisfying $\A(v_{q_2})=v_{q_2} $, for all $v_{q_2}\in Vf$. Consider now the associated $Vf$-valued 1-form $\A_{P_1}$ on $P_1$ defined from $\A_{P_1}(v)=\A\big(T(\e^{(2)} \circ F)(v)\big)$ for $v\in TP_1$ . Contraction of $\beta$ and $\A_{P_1}$ gives rise to the following 1-form on $P_1$:
\[
\langle\beta, \A_{P_1}\rangle (p_1) = \langle\beta(p_1),\A_{P_1}(p_1)\rangle \in T^*_{p_1}P_1\,.
\]
If we denote the $TQ_2$-component of the transformation $\psi_{L_2,\beta}: T_{P_1}Q_1 \to T_{P_2}Q_2$ by $\psi_{L_2,\beta}^{TQ_2}$ (i.e. $\psi_{L_2,\beta}^{TQ_2}=\tau_1^{(2)}(\psi_{L_2,\beta})$), we have the following result:

\begin{theorem}\label{thm:main} Let $(F,f)$ be a transformation pair between $\e^{(1)}:P_1\to Q_1$ and $\e^{(2)}:P_2\to Q_2$ and let $(\e^{(2)},L_2,\B_2)$ be a magnetic Lagrangian systems such that $L_2$ is $f$-regular. Fix a connection $\A$ on the bundle $f:Q_2\to Q_1$ and a map $\beta:P_1\to V^*f$, and let $\psi_{L_2,\beta}: T_{P_1}Q_1 \to T_{P_2}Q_2$ be the $(F,f)$-compatible transformation constructed above. Consider the magnetic Lagrangian system $(\e^{(1)},L_1,\B_1)$ defined by
\begin{enumerate}[i)]
\item  $L_1(v_{q_1},p_1)=\left(\psi_{L_2,\beta}^*L_2\right)  (v_{q_1},p_1) -  \langle \beta
(p_1),\A(\psi_{L_2,\beta}^{TQ_2}(v_{q_1},p_1))\rangle$;
\item $\B_1= F^*\B_2 + d\left(\langle \beta , \A_{P_1}\rangle \right)$.
\end{enumerate}
Then $\psi_{L_2,\beta}$ satisfies:
\begin{enumerate}
\item $\psi_{L_2,\beta}^*\Omega^{L_2,\B_2} = \Omega^{L_1,\B_1}$;
\item $\psi_{L_2,\beta}^*E_{L_2} = E_{L_1}$.
\end{enumerate}
\end{theorem}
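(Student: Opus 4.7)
The strategy is to recast the first claim in terms of the Poincar\'e-Cartan 1-forms $\theta^{L_i}:=\F L_i^*\pi_1^{(i)*}\theta_{Q_i}$, so that $\Omega^{L_i,\B_i}=d\theta^{L_i}+\tau_2^{(i)*}\B_i$. Since compatibility of $\psi_{L_2,\beta}$ gives $\tau_2^{(2)}\circ\psi_{L_2,\beta}=F\circ\tau_2^{(1)}$, the magnetic piece pulls back cleanly as $\psi_{L_2,\beta}^*\tau_2^{(2)*}\B_2=\tau_2^{(1)*}F^*\B_2$, and in view of the definition of $\B_1$ the assertion (1) is equivalent to the 1-form identity
\[
\psi_{L_2,\beta}^*\theta^{L_2}=\theta^{L_1}+\tau_2^{(1)*}\langle\beta,\A_{P_1}\rangle.
\]
Taking the exterior derivative of this identity, adding $\tau_2^{(1)*}F^*\B_2$ to both sides, and using $d\langle\beta,\A_{P_1}\rangle+F^*\B_2=\B_1$ then yields (1).

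The heart of the argument is a short computation in bundle-adapted coordinates. Writing $\A=(d\bar q^a+\A^a_i dq^i)\otimes\partial/\partial\bar q^a$, the compatible map takes the form $\bar v^a=\psi^a(q^i,v^i,\bar q^a,\bar p^\alpha,p^\gamma)$, and its defining condition from Proposition~\ref{pro:charac} reads $(\partial L_2/\partial\bar v^a)\circ\psi_{L_2,\beta}=\beta_a$. Differentiating the defining expression $L_1=L_2\circ\psi_{L_2,\beta}-\beta_a(\psi^a+\A^a_i v^i)$ with respect to $v^i$, the two contributions proportional to $\partial\psi^a/\partial v^i$ cancel by this identity, leaving
\[
\fpd{L_1}{v^i}=\fpd{L_2}{v^i}\bigg|_{\psi_{L_2,\beta}}-\beta_a\A^a_i.
\]
Substituting this into $\psi_{L_2,\beta}^*\theta^{L_2}=(\partial L_2/\partial v^i)|_{\psi_{L_2,\beta}}\,dq^i+\beta_a\,d\bar q^a$ rearranges the right-hand side into $\theta^{L_1}+\beta_a(d\bar q^a+\A^a_i dq^i)$, which is exactly $\theta^{L_1}+\tau_2^{(1)*}\langle\beta,\A_{P_1}\rangle$, establishing the 1-form identity and hence (1).

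For the energy assertion (2), expand $E_{L_1}=(\partial L_1/\partial v^i)v^i-L_1$ using the formulas for $\partial L_1/\partial v^i$ and $L_1$ obtained above. The terms proportional to $\beta_a\A^a_i v^i$ cancel pairwise, leaving $(\partial L_2/\partial v^i)|_{\psi_{L_2,\beta}}v^i+\beta_a\psi^a-L_2\circ\psi_{L_2,\beta}$. One more application of $\beta_a=(\partial L_2/\partial\bar v^a)|_{\psi_{L_2,\beta}}$ identifies this quantity with $\psi_{L_2,\beta}^*E_{L_2}$, since the two sums together reconstruct the full contraction $\langle\F L_2,\cdot\rangle$ with the velocity of the image point.

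The main obstacle is purely notational: one must carefully distinguish base and fiber velocities on $T_{P_2}Q_2$ and track pull-backs along the chain $P_1\to P_2\to Q_2\to Q_1$. A worthwhile sanity check, and a guide to the correct bookkeeping, is that while $L_1$ and $\B_1$ individually depend on the auxiliary connection $\A$, the 2-form $\Omega^{L_1,\B_1}$ must be $\A$-independent, because $\psi_{L_2,\beta}^*\Omega^{L_2,\B_2}$ manifestly is. Indeed, the $\A$-dependent contributions $-d(\beta_a\A^a_i)\wedge dq^i$ arising from $d\theta^{L_1}$ cancel against the corresponding terms in $d\tau_2^{(1)*}\langle\beta,\A_{P_1}\rangle$ upon assembling $\Omega^{L_1,\B_1}=d\theta^{L_1}+\tau_2^{(1)*}\B_1$.
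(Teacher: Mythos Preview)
Your proof is correct and complete. The paper itself does not give a detailed argument here---it simply remarks that the proof is a straightforward generalization of a result in~\cite{BEC}---so your write-up actually supplies more than the paper does. The route you take, namely establishing the 1-form identity
\[
\psi_{L_2,\beta}^*\theta^{L_2}=\theta^{L_1}+\tau_2^{(1)*}\langle\beta,\A_{P_1}\rangle
\]
and then differentiating, is the natural one and is consistent with the coordinate computations the paper gives later in Section~\ref{sec:Ham} (where the transformation law $\partial L_1/\partial v^i=(\partial L_2/\partial v^i)\circ\psi_{L_2,\beta}-\beta_a\Gamma^a_i$ appears explicitly). Your observation that the $\A$-dependence cancels in $\Omega^{L_1,\B_1}$ is a nice consistency check, though not logically required for the proof.
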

\begin{proof} The proof is straightforward generalization of the result in~\cite{BEC}.
\end{proof}

\paragraph{Example (Routh reduction).} The magnetic Lagrangian system on $\pi: Q\to Q/G$ has the following properties:
\begin{itemize}
\item $L_{1}= (i_{\mu}\circ \Pi_{\mu}^{-1})^{*}L - \Ac_{\mu}$, i.e. $L_{1}(T\pi(v_{q}) ,q)=  L(v_{q}) - \langle \mu, \Ac (v_{q})\rangle $ for $v_{q}\in J_{L}^{-1}(\mu)$ arbitrary,
\item $\B_{1}= d\Ac_{\mu}$.
\end{itemize}

\paragraph{The tangency condition.} Under some restrictive conditions (to be discussed below) the previous map can be proved to be a diffeomorphism. However, in general,  the magnetic Lagrangian system on $T_{P_1}Q_1$ is not regular even if the original system on $\e^{(2)}:P_2\to Q_2$ was. Theorem~\ref{thm:main} states that the two Lagrangian systems are related, but this does not guarantee that the solution curves to the (pre)-symplectic equations are related.

Here we use the results from Section~\ref{subset:pullback} to get a sufficient condition for solutions to the EL equations of $(\e^{(1)}:P_1\to Q_1, L_1, \B_1)$ to be related to those of a regular system $(\e^{(2)}:P_2\to Q_2, L_2, \B_2)$. From Section~\ref{subset:pullback} is it necessary and sufficient that $X_{E_{L_{2}}}$ is contained in the image of $T\psi_{L_{2},\beta}$.  The next proposition provides information on the image of $T\psi_{L_2,\beta}$.

\begin{proposition}\label{prop:tangency} Let $X_{s_{2}}$ denote an arbitrary tangent vector to $M=T_{P_2}Q_2$ at $s_{2}=(p_2,v_{q_2})=\psi_{L_2,\beta}(s_{1})$. Then $X_{s_{2}}=T\psi_{L_2,\beta}(Y_{s_{1}})$ for some $Y_{s_{1}}$ tangent to $N=T_{P_1}Q_1$ at $s_{1}=(p_1,v_{q_1})$  iff the following two conditions are satisfied:
\begin{enumerate}
\item $X_{s_{2}}$ and $Y_{s_{1}}$ are compatible;
\item $$X_{s_{2}}\left(\fpd{L_2}{{\bar  v}^a}\right) = \left(T\tau_2^{(1)}(Y_{s_{1}})\right)(\beta_a).$$
\end{enumerate}
\end{proposition}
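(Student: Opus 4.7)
The proof naturally splits into the two implications. The forward direction is immediate from the two defining properties of $\psi_{L_2,\beta}$. Setting $X_{s_2}=T\psi_{L_2,\beta}(Y_{s_1})$, compatibility of $X_{s_2}$ and $Y_{s_1}$ follows by applying the tangent functor to the two conditions (i) and (ii) characterizing the compatibility of the map $\psi_{L_2,\beta}$. Condition 2 is obtained by differentiating, along $Y_{s_1}$, the characterizing identity $\alpha_{L_2}\circ\psi_{L_2,\beta} = \beta\circ\tau_2^{(1)}$ of Proposition~\ref{pro:charac}, whose coordinate form is $\fpd{L_2}{\bar v^a}\circ\psi_{L_2,\beta} = \beta_a\circ\tau_2^{(1)}$.

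For the reverse implication I would work in the bundle-adapted coordinates $(q^i,\bar q^a,\bar p^\alpha,p^\gamma,v^i)$ on $T_{P_1}Q_1$ and write $Y_{s_1}$ as in \eqref{eq:compavecy}. By condition~1 and the general form \eqref{eq:compavecx}, the vector $X_{s_2}$ is completely determined by $Y_{s_1}$ except for its coefficients $\hat X^a$ along $\partial/\partial\bar v^a$; the candidate vector $T\psi_{L_2,\beta}(Y_{s_1})$ has all the same lower-order components, and its own $\partial/\partial\bar v^a$-coefficients $\hat X^a_\psi$ are given explicitly by the chain-rule formula displayed just before the Routh-reduction example. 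So it suffices to show $\hat X^a = \hat X^a_\psi$.

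The key calculation is to differentiate the coordinate identity $\fpd{L_2}{\bar v^a}\circ\psi_{L_2,\beta} \equiv \beta_a\circ\tau_2^{(1)}$ with respect to each of the independent variables $q^i$, $v^i$, $\bar q^b$, $\bar p^\alpha$ and $p^\gamma$. This produces five chain-rule identities linking the mixed partials of $L_2$, the partials of $\psi^a$ and the partials of $\beta_a$. Substituting these identities into the coordinate expansion of condition~2 makes every term carrying a $Y$-coefficient cancel, leaving the purely algebraic equation
\[
\frac{\partial^2 L_2}{\partial\bar v^a\partial\bar v^b}\,\bigl(\hat X^a - \hat X^a_\psi\bigr) = 0.
\]

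To conclude I would invoke the $f$-regularity of $L_2$, which, as observed just after its definition, is equivalent to the non-vanishing of the Hessian $\det(\partial^2 L_2/\partial\bar v^a\partial\bar v^b)$. Hence $\hat X^a = \hat X^a_\psi$ for every $a$, i.e.\ $X_{s_2} = T\psi_{L_2,\beta}(Y_{s_1})$, as wanted. The main obstacle is purely bookkeeping: one has to keep track of the five chain-rule identities and verify that they assemble so that every $Y$-coefficient drops out, leaving only the Hessian block. Conceptually the content of the proposition is that condition~2 is the linearization of the identity $\alpha_{L_2}\circ\psi_{L_2,\beta}=\beta\circ\tau_2^{(1)}$ that characterizes $\psi_{L_2,\beta}$ in Proposition~\ref{pro:charac}, and $f$-regularity is exactly what allows this linearization to pin down the remaining $\bar v^a$-components of $X_{s_2}$ uniquely.
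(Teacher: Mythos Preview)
Your proof is correct and follows essentially the same line as the paper's. The forward direction is identical. For the converse, the paper packages the argument slightly more economically: rather than differentiating the defining identity with respect to each coordinate and tracking five chain-rule relations, it simply observes that the pair $T\psi_{L_2,\beta}(Y_{s_1}),Y_{s_1}$ also satisfies conditions 1 and 2 (by the forward direction), and then proves uniqueness by subtracting the two instances of condition~2; compatibility forces the difference to lie in the $\partial/\partial\bar v^a$-directions, so one arrives immediately at the same Hessian equation $(\hat X^b-\hat X'^b)\,\partial^2 L_2/\partial\bar v^a\partial\bar v^b=0$ and concludes via $f$-regularity. Your explicit bookkeeping and the paper's uniqueness shortcut are equivalent formulations of the same computation.
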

\begin{proof} We first show that the two conditions hold if $X_{s_{2}}=T\psi_{L_2,\beta}(Y_{s_{1}})$. Since $\psi_{L_2,\beta}$ is a compatible map, the pair $X_{s_{2}},Y_{s_{1}}$ is compatible. Deriving the left hand side of the equality $\alpha_{L_2}\circ \psi_{L_2,\beta} = \beta \circ \tau_2^{(1)}$, becomes $T\alpha_{L_2}\circ T\psi_{L_2,\beta} (Y_{s_{1}}) = T\alpha_{L_2} (X_{s_{2}})$. The right hand side equals $T\beta \big(T\tau_2^{(1)}(Y_{s_{1}})\big)$. In components, we have $T(\alpha_{L_2})_a (X_{s_{2}}) = T\beta_a (T\tau_2^{(1)}Y_{s_{1}})$ which is the second condition.

For the converse statement, let $X_{s_{2}}, Y_{s_{1}}$ denote a pair of vectors satisfying 1. and 2. Note that the pair $T\psi_{L_2,\beta}(Y_{s_{1}}),Y_{s_{1}}$ also satisfies 1. and 2., and  that the proof is concluded if we can show uniqueness, i.e. two pairs $\bar X_{s_{2}},Y_{s_{1}}$ and $\bar X'_{s_{2}},Y_{s_{1}}$ satisfying conditions 1. and 2., will necessarily be equal: $\bar X_{s_{2}} = \bar X'_{s_{2}}$.

From the second condition (and using the coordinate expressions for compatible vectors given before in Equation~\eqref{eq:compavecx}) it follows
\[
(\bar X_{s_{2}} -\bar X'_{s_{2}})\left(\fpd{L_2}{{\bar  v}^a}\right)=0, \mbox{ or }
(\hat{\bar X}_{s_{2}}^b - \hat{\bar X}_{s_{2}}^{'b})\fpd{^2L_2}{{\bar  v}^a\partial {\bar v}^b} =0.
\]
Regularity of $\alpha_{L_2}$ implies uniqueness: $\bar X_{s_{2}}=\bar X'_{s_{2}}$.
\end{proof}

Denoting as before $\beta_a$ the component of $\beta$ along $d\bar q^a$, coordinate expressions for $\alpha_{L_2}$ and $\beta$ are:
\[
\alpha_{L_2}:(q^i,\bar q^a,v^i,\bar v^a,\bar p^\alpha)\mapsto\left(\fpd{L_2}{\bar v^a}\right),\hspace{.5cm}\beta:(q^i, \bar q^a,\bar p^\alpha, p^\gamma)\mapsto\beta_a.
\]
Taking tangent vectors points $s_{2}=\psi_{L_2,\beta}(s_{1})$, and using coordinate expressions for $X_{s_{2}}$ and $Y_{s_{1}}$ as in Equation~\eqref{eq:compavecx}, one finds that the equation ${(T\alpha_{L_2})}_b(X_{s_{2}}) = T\beta_b\big(T\tau_2^{(1)}Y_{s_{1}}\big)$ reads:
\begin{align}
Y_{s_{1}}^i\fpd{^2 L_2}{q^i\partial\bar v^b} + Y_{s_{1}}^a\fpd{^2 L_2}{\bar q^a\partial\bar v^b} &+ Y_{s_{1}}^\alpha\fpd{^2 L_2}{\bar p^\alpha\partial\bar v^b} +\hat Y_{s_{1}}^i\fpd{^2 L_2}{v^i\partial\bar v^b} + \hat X_{s_{1}}^a\fpd{^2 L_2}{\bar v^a\partial\bar v^b}\notag \\
&= Y_{s_{1}}^i\fpd{\beta_b}{q^i} + Y_{s_{1}}^a \fpd{\beta_b}{\bar q^a} + Y_{s_{1}}^\alpha \fpd{\beta_b}{\bar p^\alpha} + Y^\gamma_{s_{1}} \fpd{\beta_b}{p^\gamma}.\label{eq:master}
\end{align}

\paragraph{Example (Routh reduction).}  The tangency condition holds if $\beta$ is defined by a constant chosen momentum $\mu$. Given any vector $X_{s_{2}=v_{q}}$, then a compatible vector $Y_{s_{1} =(v_{[q]_{G}},q)}$ in $T(T_{Q}(Q/G))$ is completely determined from $X_{v_{q}}$. Equation~\ref{eq:master} can be rewritten as:
\[
X_{v_{q}}\left(\fpd{L_2}{{\bar  v}^a}\right) = \left(T\tau_{Q}(X_{v_{q}})\right)(\beta_a), \mbox{ with } \tau_{Q}:TQ\to Q.
\]
If $X_{v_{q}}=X_{E_{L}}(v_{q})$ the previous equation will be satisfied if $\beta$ is defined from a chosen fixed momentum $\mu$. The previous equation then becomes $X_{E_{L}}(J_{L}-\mu)=0$. It is well-known that this is satisfied: an invariant Hamiltonian vector field is tangent to the level set of a momentum map.

\paragraph{The diffeomorphic case.} The most interesting case of a compatible transformation $\psi_{L,\beta}$ arises precisely when this map is a diffeomorphism. In this case one has an induced system which is symplectomorphic to the original one, and hence its dynamics faithfully represent that of the original system. We prove here a useful condition for this to happen.

Assume that the dimensions of $T_ {P_i}Q_i$ agree, i.e. with the notations of Section~\ref{sec:transf} the following equality holds:  $2n_1+k_1=2n_2+k_2$. Then we have $n_1+k_1-n_2-k_2=n_2-n_1$, i.e. the dimensions of the fibers of $F$ and $f$ coincide, a necessary condition for $\psi_{L,\beta}$ to be a diffeomorphism. Note that in this case both the indices $a$ and $\gamma$ run from $1$ to $n_2-n_1$.

\begin{proposition}\label{pro:diff} In the situation above, assume the following regularity condition holds: the map $\beta_{|F^{-1}(p_2)}: F^{-1}(p_2) \to V^*_{q_2}f$ is a diffeomorphism for each $p_2 \in P_2$, with $\e^{(2)}(p_2) = q_2$. Then $\psi_{L,\beta}$ is a diffeomorphism.
\end{proposition}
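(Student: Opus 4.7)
The plan is to exploit the coordinate characterization
\[
\fpd{L_2}{\bar v^a}\bigl(q^i,\bar q^a,\bar p^\alpha,v^i, \psi^a(q^i,\bar q^a,\bar p^\alpha,p^\gamma,v^i)\bigr) \equiv \beta_a(q^i,\bar q^a,\bar p^\alpha,p^\gamma)
\]
of $\psi_{L_2,\beta}$ recorded just after Proposition~\ref{pro:charac}, combined with the dimension match $2n_1+k_1=2n_2+k_2$ (which forces the indices $a$ and $\gamma$ to share the same range), in order to show that $\psi_{L_2,\beta}$ is a smooth bijection with everywhere nondegenerate differential, hence a diffeomorphism.

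First I would verify that $\psi_{L_2,\beta}$ is a local diffeomorphism. Since the rank of any compatible map equals $\dim P_2+\dim Q_1+\rank(\partial\psi^a/\partial p^\gamma)$, under the dimension-matching assumption it is enough to show that the square matrix $(\partial\psi^a/\partial p^\gamma)$ is invertible at every point. Differentiating the identity above with respect to $p^\gamma$ yields, by the chain rule,
\[
\fpd{^2 L_2}{\bar v^b\partial\bar v^a}\cdot \fpd{\psi^a}{p^\gamma} = \fpd{\beta_b}{p^\gamma}.
\]
The Hessian on the left is invertible by $f$-regularity of $L_2$, while the matrix $(\partial\beta_b/\partial p^\gamma)$ represents the differential of $\beta|_{F^{-1}(p_2)}:F^{-1}(p_2)\to V^*_{q_2}f$ (fixing $q^i,\bar q^a,\bar p^\alpha$ parametrizes exactly a fibre of $F$), and this is invertible by hypothesis. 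Consequently $(\partial\psi^a/\partial p^\gamma)$ is invertible everywhere.

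Next I would establish bijectivity directly via Proposition~\ref{pro:charac}. For surjectivity, given $s_2=(v_{q_2},p_2)\in T_{P_2}Q_2$, set $v_{q_1}:=Tf(v_{q_2})$ and let $p_1\in F^{-1}(p_2)$ be the unique preimage with $\beta(p_1)=\alpha_{L_2}(s_2)$, furnished by the hypothesis that $\beta|_{F^{-1}(p_2)}$ is a bijection onto $V^*_{q_2}f$. Then $s_1=(v_{q_1},p_1)$ is $(F,f)$-compatible with $s_2$, and taking $s_2$ itself as the compatible base point in the construction of $\psi_{L_2,\beta}(s_1)$, the required vertical correction $w_{q_2}\in V_{q_2}f$ must satisfy $\alpha^{s_2}_{L_2}(w_{q_2})=\beta(p_1)=\alpha_{L_2}(s_2)=\alpha^{s_2}_{L_2}(0)$, so $f$-regularity forces $w_{q_2}=0$ and $\psi_{L_2,\beta}(s_1)=s_2$. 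For injectivity, if $\psi_{L_2,\beta}(s_1)=\psi_{L_2,\beta}(s_1')=s_2$, compatibility forces $F(p_1)=F(p_1')=p_2$ and $v_{q_1}=Tf(v_{q_2})=v_{q_1}'$, while $\beta(p_1)=\alpha_{L_2}(s_2)=\beta(p_1')$ combined with the injectivity of $\beta|_{F^{-1}(p_2)}$ yields $p_1=p_1'$.

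Combining the two steps produces a smooth bijection that is everywhere a local diffeomorphism, hence a diffeomorphism. The conceptual heart of the argument is the matrix identity obtained by implicit differentiation: once it is isolated, each of the two regularity hypotheses contributes precisely one invertible factor, and the global bijectivity mirrors this at the level of the fibres of $F$ and $V^*f$. The only minor subtlety to watch is making sure the chosen compatible base point in the construction of $\psi_{L_2,\beta}(s_1)$ can be taken to be $s_2$ itself, which is what forces $w_{q_2}=0$ and closes the surjectivity argument without ambiguity.
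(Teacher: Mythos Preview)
Your proof is correct and follows essentially the same route as the paper's: both arguments differentiate the defining relation $\partial L_2/\partial\bar v^b\circ\psi=\beta_b$ with respect to $p^\gamma$ to obtain the matrix identity linking $(\partial\psi^a/\partial p^\gamma)$, the Hessian of $L_2$, and $(\partial\beta_b/\partial p^\gamma)$, and then conclude maximal rank from the two regularity hypotheses. The only difference is cosmetic: the paper dismisses bijectivity with ``this is easily checked using the condition on $\beta$'', whereas you spell out the surjectivity and injectivity arguments explicitly using Proposition~\ref{pro:charac} and the fibrewise bijectivity of $\beta$.
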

\begin{proof} The fiber submanifold $F^{-1}(p_2)$ has coordinates $p^\gamma$, and in particular
\[
\rank (\partial\beta_a/\partial p^\gamma)_{a,\gamma}=n_2-n_1.
\]
The rank of $\psi_{L,\beta}$ is maximal iff $\rank (\partial\psi^a/\partial p^\gamma)_{a,\gamma}$ is maximal, where $\psi^a$ are the components of $\psi$, implicitly defined as:
\[
\fpd{L}{{\bar  v}^b}\left(q,\bar q,\dot q,\psi^a(q,\dot q,\bar q,\bar p, p),\bar p\right) = \beta_b\big(q,\bar q,\bar p, p\big).
\]
By $f$-regularity of $L$ it follows $\rank (\partial\psi^a/\partial p^\gamma)_{a,\gamma}=\rank (\partial\beta_a/\partial p^\gamma)_{a,\gamma}$. Since $\psi_{L,\beta}$ is a bijection (this is easily checked using the condition on $\beta$) and has constant maximal rank, the result holds.
\end{proof}
Moreover, from the proof it is clear that the previous proposition fully characterizes the case where $\psi_{L,\beta}$ is a diffeomorphisms, i.e., the condition on $\beta$ in Proposition~\ref{pro:diff} is also necessary. The following theorem guarantees the regularity of the induced systems under the transformation $\psi_{L,\beta}$ in this situation.
\begin{theorem} Assume $\psi_{L,\beta}$ is a diffeomorphism. Then the induced magnetic Lagrangian system on $\e^{(1)}:P_1\to Q_1$ is hyperregular.
\end{theorem}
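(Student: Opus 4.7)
The plan is to reduce the hyperregularity of $(\e^{(1)}:P_1\to Q_1,L_1,\B_1)$ to that of the source system $(\e^{(2)}:P_2\to Q_2,L_2,\B_2)$---which is the standing hypothesis throughout this subsection---by exhibiting $\F L_1$ as a conjugate of $\F L_2\circ\psi_{L_2,\beta}$ via a connection-induced diffeomorphism of cotangent-type fiber products. Two things must be verified: that $\F L_1$ is a global diffeomorphism, and that $\pi_1^{(1)*}\omega_{Q_1}+\pi_2^{(1)*}\B_1$ is symplectic on $T^*_{P_1}Q_1$.

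First I would handle the presymplectic form. By Theorem~\ref{thm:main}, $\Omega^{L_1,\B_1}=\psi_{L_2,\beta}^*\Omega^{L_2,\B_2}$, and since $\Omega^{L_2,\B_2}$ is symplectic (hyperregularity of the source system) and $\psi_{L_2,\beta}$ is a diffeomorphism by hypothesis, the pullback $\Omega^{L_1,\B_1}$ is symplectic as well. The symplecticity of $\pi_1^{(1)*}\omega_{Q_1}+\pi_2^{(1)*}\B_1$ will then follow from $\Omega^{L_1,\B_1}=\F L_1^*(\pi_1^{(1)*}\omega_{Q_1}+\pi_2^{(1)*}\B_1)$ once $\F L_1$ is shown to be a diffeomorphism.

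The core step is the construction of a diffeomorphism $\Xi:T^*_{P_1}Q_1\to T^*_{P_2}Q_2$ satisfying $\Xi\circ\F L_1=\F L_2\circ\psi_{L_2,\beta}$. Using the connection $\A$ on $f:Q_2\to Q_1$ to split $T^*Q_2=(Vf)^\circ\oplus V^*f$, I would define
\[\Xi(\alpha_{q_1},p_1):=\bigl((Tf)^*\alpha_{q_1}+\widehat{\beta(p_1)},\,F(p_1)\bigr),\]
where $\widehat{\beta(p_1)}\in T^*_{q_2}Q_2$ is the unique extension of $\beta(p_1)\in V^*_{q_2}f$ annihilating horizontal vectors; in adapted coordinates, $\Xi$ reads $(\alpha_i;q,\bar q,\bar p,p)\mapsto(\alpha_i+\beta_a\A^a_i,\beta_a;q,\bar q,\bar p)$. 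The conjugation identity $\Xi\circ\F L_1=\F L_2\circ\psi_{L_2,\beta}$ will reduce to the short coordinate computation $\partial L_1/\partial v^i=\partial L_2/\partial v^i-\beta_a\A^a_i$, obtained by differentiating the definition of $L_1$ in Theorem~\ref{thm:main} and applying the characteristic relation $\partial L_2/\partial\bar v^a\circ\psi_{L_2,\beta}=\beta_a$ from Proposition~\ref{pro:charac}.

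Finally I would show that $\Xi$ is itself a diffeomorphism, which will make $\F L_1=\Xi^{-1}\circ\F L_2\circ\psi_{L_2,\beta}$ a composition of three diffeomorphisms and complete the proof. Equality of dimensions $2n_1+k_1=2n_2+k_2$ is already guaranteed by $\psi_{L_2,\beta}$ being a diffeomorphism, and bijectivity of $\Xi$ is exactly the content of the criterion in Proposition~\ref{pro:diff}: $\beta$ restricts to a diffeomorphism from each fiber of $F$ onto $V^*_{q_2}f$, which inverts the $V^*f$-component of the target to recover $p_1$, after which $\alpha_i=\tilde\alpha_i-\tilde\alpha_a\A^a_i$ is determined algebraically, with smooth dependence throughout. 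The main obstacle I anticipate is simply formulating $\Xi$ intrinsically and matching it with $\F L_1$; once the connection-induced splitting of $T^*Q_2$ is in place, the rest is linear algebra and a direct use of the characterization of $\psi_{L_2,\beta}$ already established.
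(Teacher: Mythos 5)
Your proposal is correct and follows the same outline as the paper's (very terse) proof: establish that $\F L_1$ is a global diffeomorphism, and then transport symplecticity of $\Omega^{L_1,\B_1}=\psi_{L_2,\beta}^*\Omega^{L_2,\B_2}$ down to the form $\big(\pi^{(1)}_1\big)^*\omega_{Q_1}+\big(\pi^{(1)}_2\big)^*\B_1$ via $\F L_1^{-1}$. The difference is that the paper simply asserts the first step (``it is clear that $\F L_1$ is a global diffeomorphism, because $\psi_{L,\beta}$ is a diffeomorphism''), whereas you actually prove it by constructing the connection-induced map $\Xi$ on the cotangent-type fiber products and verifying the conjugation identity $\Xi\circ\F L_1=\F L_2\circ\psi_{L_2,\beta}$. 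Your $\Xi$ is precisely the map $\psi_{\A,\beta}(\alpha_{q_1},p_1)=\big(T^*f(\alpha_{q_1})+\langle\beta,\A\rangle,F(p_1)\big)$ that the paper introduces only later, in Section~\ref{sec:Ham}, and your coordinate identity $\partial L_1/\partial v^i=\psi_{L_2,\beta}^*\big(\partial L_2/\partial v^i\big)-\beta_a\Gamma^a_i$ is exactly the momentum transformation law computed there. Your appeal to the necessity direction of Proposition~\ref{pro:diff} (to get that $\beta$ restricts to fiberwise diffeomorphisms, hence that $\Xi$ is invertible) is legitimate, since the paper explicitly records that the condition on $\beta$ is also necessary for $\psi_{L,\beta}$ to be a diffeomorphism. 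In short: same strategy, but you supply the missing substance; what your version buys is an honest proof of the assertion the paper waves through, at the cost of anticipating machinery the paper defers to its Hamiltonian section.
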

\begin{proof} It is clear that $\F L_1$ is a global diffeomorphism, because  $\psi_{L,\beta}$ is a diffeomorphism. On the other hand, since $\psi_{L,\beta}$ is a symplectomorphism, it follows that the form $\big(\pi^{(1)}_1\big)^*\omega_{Q_1} +\big(\pi^{(1)}_2\big)^*\B_1$ is symplectic.
\end{proof}

\section{Fiberwise Reducible magnetic Lagrangian systems}\label{sec:transrouth}
This section is devoted to a particular kind of magnetic Lagrangian systems where the dynamics is easily reducible. These systems posses symmetry along the fibers of $\e:P\to Q$ which, roughly speaking, allows for a reduction of the base space space $P$ of $T_PQ$ while leaving the tangent part $TQ$ invariant.

\paragraph{Example (Routh reduction).} In Section~\ref{sec:transf} we have shown that a general Lagrangian system, amenable to Routh reduction, can be transformed into a magnetic Lagrangian system $(Q\to Q/G, (i_{\mu}\circ\Pi_{\mu}^{-1})^{*}L-\Ac_{\mu},d\Ac_{\mu})$ in such a way that the solution to the EL equations are mapped into solution of the original EL equations with fixed momentum $\mu$. There is still symmetry left in the transformed system: the fibers of $Q\to Q/G$ are equal to $G$ and we will show below that the transformed magnetic Lagrangian system is reducible under the fiberwise action of the isotropy subgroup $G_{\mu}$.

\paragraph{Preliminary results and definitions.}  Let $\e:P\to Q$ be a bundle and $\Phi^P$ denotes a $G$-action on $P$ such that  $\e\circ \Phi^P=\e$. Then $\Phi^{P}$ naturally induces a lifted action on $T_PQ$:
\[
\Phi^{T_PQ}_g(v_q,p)= (v_q,\Phi^P_g(p))=(v_q,gp),\;(v_q,p)\in T_PQ,\; g\in G.
\]

\begin{definition} A magnetic Lagrangian system $(\e:P\to Q,L,\B)$ together with a $G$-action $\Phi^P$ on $P$ is \emph{fiberwise-reducible} if the following conditions hold:
\begin{enumerate} \item The action of $G$ on $P$ is tangent to the fibers, i.e. $\e (\Phi^P_g (p))=\e (p)$;
\item $L$ is $G$-invariant with respect to the lift of $\Phi^P$ to $T_PQ$: $L(v_q,\Phi^P_g (p))=L(v_q,p)$;
\item The 2-form $\B$ on $P$ is reducible to $P/G$, i.e. $\B$ is $G$-invariant and satisfies $\imath_{\xi_Q} \B =0$ for all $\xi\in \lag$. \end{enumerate}
\end{definition}

We write $\bar{B}$ for the projection of $B$ onto $P/G$ and $\bar{L}$ for the projection of $L$ onto $T_{P/G} Q$.  The quotient manifold $P/G$ can be naturally fibered over $Q$, the fibration given by $\bar\e:[p]\mapsto \e(p)$. Since $\Phi$ is assumed to be free, the fibration $\bar\e:P/G\to Q$ is a principal $G$-bundle (that is, local triviality holds; see~\cite{sharpediff}). Since $\B$ projects to $\bar \B$, $\bar{\B}$ is closed.

\begin{definition} Let $(\e:P\to Q,L,\B,G)$ be a fiberwise-reducible magnetic Lagrangian system. We call $(\bar{\e}:P/G\to Q,\bar{L},\bar{\B})$ the \emph{(associated) reduced magnetic Lagrangian system}.
\end{definition}

We use the following notations, in agreement with the notations used before:
\begin{enumerate}
\item $\tau_G:T_PQ \to T_{P/G}Q$ is the projection that maps $(v_q,p )\in T_PQ$ onto $(v_q,[p] )\in T_{P/G}Q$.
\item $p_G:T_P^*Q \to T_{P/G}^*Q$ is the projection that maps $(\alpha_q,p )\in T_P^*Q$ onto $(\alpha_q,[p] )\in T_{P/G}^*Q$.
\item $\bar{\pi}_1:T_{P/G}^*Q \to T^*Q$ is the projection that maps $(\alpha_q,[p] )\in T_{P/G}^*Q$ onto $\alpha_q\in T^*Q$.
\item $\bar{\pi}_2:T_{P/G}^*Q \to P/G$ is the projection that maps $(\alpha_q,[p] )\in T_P^*Q$ onto $[p]\in P/G$.
\end{enumerate}

We are interested in reducing the dynamics in a fiberwise-reducible Lagrangian system to the associated reduced magnetic Lagrangian system $(\bar{\e}:P/G\to Q,\bar{L},\bar{\B})$. For that purpose, we need the following two lemmas:

\begin{lemma}\label{aux1} Let $(\e:P\to Q,L,\B,G)$  be a fiberwise-reducible Lagrangian system and consider the reduced magnetic Lagrangian system $(\bar{\e}:P/G\to Q,\bar{L},\bar{\B})$. Then $p_G\circ \F L=\F \bar{L}\circ \tau_G$, i.e., the diagram in Figure~\ref{fig:lemaux1} commutes.

\begin{figure}[htb]
\begin{center}
\includegraphics{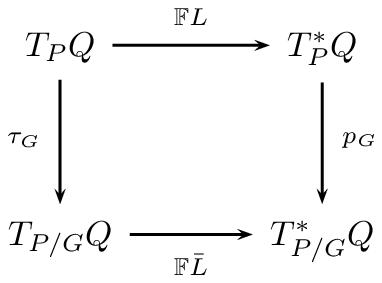}
\end{center}
\caption{Lemma~\ref{aux1}}\label{fig:lemaux1}
\end{figure}
\end{lemma}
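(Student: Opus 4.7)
The claim is really an unwinding of definitions, so my plan is to chase the arrows in Figure~\ref{fig:lemaux1} by evaluating both sides at a generic point and comparing them against the defining relations of the fiber derivative.

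First I would unpack the reduction: by $G$-invariance of $L$, the projection $\tau_G$ descends $L$ to a well-defined function $\bar L$ on $T_{P/G}Q$, characterized by the identity $L(v_q,p)=\bar L(v_q,[p])$ for every $(v_q,p)\in T_PQ$. I would also recall the projections explicitly, namely $\tau_G(v_q,p)=(v_q,[p])$ and $p_G(\alpha_q,p)=(\alpha_q,[p])$, so that the second components of both composed maps automatically agree with $[p]$.

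Then I would fix $(v_q,p)\in T_PQ$ and write $\F L(v_q,p)=(\alpha_q,p)$ and $\F\bar L(v_q,[p])=(\bar\alpha_q,[p])$, where by definition
\[
\langle \alpha_q,w_q\rangle = \left.\tfrac{d}{du}\right|_{u=0} L(v_q+uw_q,p), \qquad \langle \bar\alpha_q,w_q\rangle = \left.\tfrac{d}{du}\right|_{u=0} \bar L(v_q+uw_q,[p]),
\]
for every $w_q\in T_qQ$. Applying $L=\bar L\circ\tau_G$ to the curve $u\mapsto (v_q+uw_q,p)$, which $\tau_G$ maps to $u\mapsto(v_q+uw_q,[p])$, one obtains $L(v_q+uw_q,p)=\bar L(v_q+uw_q,[p])$ for all $u$, hence $\alpha_q=\bar\alpha_q$ after differentiating at $u=0$.

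Putting these pieces together, $p_G(\F L(v_q,p))=(\alpha_q,[p])=(\bar\alpha_q,[p])=\F\bar L(v_q,[p])=\F\bar L(\tau_G(v_q,p))$, which is the required commutativity. There is no real obstacle here; the only thing to watch is that the $G$-invariance of $L$ is used in exactly the right form to produce the curve-level identity $L(v_q+uw_q,p)=\bar L(v_q+uw_q,[p])$, so that differentiation in the $T_qQ$-direction commutes with passage to the quotient in the $P$-factor.
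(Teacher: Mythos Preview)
Your proof is correct and is exactly the unwinding of definitions that the paper has in mind; the paper itself simply declares the result ``immediate'' without writing out any details. Your argument makes explicit the one substantive point---that $G$-invariance of $L$ gives $L(v_q+uw_q,p)=\bar L(v_q+uw_q,[p])$ for all $u$, so the $T_qQ$-components of the two fiber derivatives coincide---which is precisely why the commutativity is automatic.
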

\begin{proof} This result is immediate.
\end{proof}

\begin{lemma} The map $\tau_G$ between the presymplectic manifolds $(T_P Q,\Omega^{L,\B})$ and $(T_{P/G}Q,\bar{\Omega}^{\bar{L},\bar{\B}})$ satisfies $\tau_G^*\bar{\Omega}^{\bar{L},\bar{\B}}=\Omega^{L,\B}$ and $\tau_G^*E_{\bar{L}}=E_L$.
\end{lemma}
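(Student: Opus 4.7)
The plan is to reduce everything to Lemma~\ref{aux1}, which gives the key intertwining $p_G\circ \F L = \F\bar L\circ \tau_G$, together with two elementary identities relating the canonical projections on the fiber products to the quotient map $\pi:P\to P/G$.

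First I would observe that the projections satisfy $\bar\pi_1\circ p_G = \pi_1$ (both just forget the bundle coordinate) and $\bar\pi_2\circ p_G = \pi\circ \pi_2$, so
\[
p_G^*(\bar\pi_1^*\omega_Q) = \pi_1^*\omega_Q, \qquad p_G^*(\bar\pi_2^*\bar\B) = \pi_2^*\pi^*\bar\B = \pi_2^*\B,
\]
where the last equality is the defining property of the projected 2-form $\bar\B$ on $P/G$ (which is well defined by the reducibility hypotheses on $\B$). Combining these,
\[
p_G^*\bigl(\bar\pi_1^*\omega_Q + \bar\pi_2^*\bar\B\bigr) = \pi_1^*\omega_Q + \pi_2^*\B.
\]

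Next, using Lemma~\ref{aux1} to rewrite the composition $\F\bar L\circ \tau_G = p_G\circ \F L$, the desired identity follows from a direct chase:
\[
\tau_G^*\bar\Omega^{\bar L,\bar\B} = \tau_G^*\F\bar L^*\bigl(\bar\pi_1^*\omega_Q + \bar\pi_2^*\bar\B\bigr) = \F L^*p_G^*\bigl(\bar\pi_1^*\omega_Q + \bar\pi_2^*\bar\B\bigr) = \F L^*\bigl(\pi_1^*\omega_Q + \pi_2^*\B\bigr) = \Omega^{L,\B}.
\]

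For the energies, I would evaluate $E_{\bar L}$ at a point $\tau_G(v_q,p) = (v_q,[p])$. By Lemma~\ref{aux1}, if $\F L(v_q,p) = (\alpha_q,p)$ then $\F\bar L(v_q,[p]) = (\alpha_q,[p])$ with the same $\alpha_q$. Hence by the definition of the pairing on the fiber product,
\[
\langle \F\bar L(v_q,[p]),(v_q,[p])\rangle = \langle\alpha_q,v_q\rangle = \langle \F L(v_q,p),(v_q,p)\rangle.
\]
Since $\bar L$ is defined as the projection of the $G$-invariant function $L$, one has $\bar L(v_q,[p]) = L(v_q,p)$, and subtracting gives $E_{\bar L}\circ \tau_G = E_L$. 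There is no real obstacle here; the only point requiring minor care is checking that the projections of $\bar\pi_i$, $\pi_i$, $p_G$, $\tau_G$ commute as claimed, which is immediate from the explicit definitions of these maps on points of the fiber products.
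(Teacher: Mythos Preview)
Your proof is correct and follows essentially the same approach as the paper: the first identity is obtained by the same diagram chase via Lemma~\ref{aux1} (you simply make the intermediate identities $\bar\pi_1\circ p_G=\pi_1$, $\bar\pi_2\circ p_G=\pi\circ\pi_2$, and $\pi^*\bar\B=\B$ explicit where the paper leaves them implicit). For the energy identity the paper opts for a one-line coordinate computation, whereas you give an intrinsic pointwise argument; the two are equivalent and equally direct.
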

\begin{proof} The first statement follows from diagram chasing:
$$\tau_G^*\F \bar{L}^*(\bar{\pi}^*_1\omega_{Q} +\bar{\pi}^*_2\bar{\B})=\F L^*p_G^* (\bar{\pi}^*_1\omega_{Q} +\bar{\pi}^*_2\bar{\B})=\F L^* (\pi^*_1\omega_{Q} +\pi^*_2\B).$$
The second part is easily checked in coordinates:
$$\tau_G^*E_{\bar{L}}=\tau_G^*\left(\frac{\partial \bar{L}}{\partial v_i}v_i-\bar{L} \right)=\frac{\partial L}{\partial v_i}v_i-L=E_L.$$
\end{proof}

The tangency condition is trivially satisfied since the map $T_{P}Q \to T_{P/G}Q$ is a submersion. The next proposition summarizes the reduction of these fiberwise reducible systems.
\begin{proposition} Consider a fiberwise reducible magnetic system $(\e:P\to Q,L,\B,G)$.  The associated reduced magnetic Lagrangian system $(\bar{\e}:P/G\to Q,\bar{L},\bar{B})$ is such that any solution to the EL equation is the projection of a solution to the EL equations of the reducible system.
\end{proposition}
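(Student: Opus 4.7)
My plan is to lift any solution of the reduced EL equation through the principal bundle $P\to P/G$ and verify the upstairs EL equation by a direct diagram chase using the two pull-back identities established in the preceding lemmas. Let $\bar p(t)\in P/G$ be a solution of the reduced system, so that $\bar\gamma(t):=(\dot q(t),\bar p(t))\in T_{P/G}Q$, with $q(t)=\bar\e(\bar p(t))$, satisfies $i_{\dot{\bar\gamma}(t)}\bar{\Omega}^{\bar{L},\bar{\B}}=-dE_{\bar L}(\bar\gamma(t))$.

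First I would produce a smooth lift $p(t)\in P$ of $\bar p(t)$. This is automatic: since the $G$-action on $P$ is free and proper, $P\to P/G$ is a principal $G$-bundle, so it suffices to pick any point $p_0$ in the fiber over $\bar p(0)$ and, for instance, horizontally lift $\bar p(t)$ with respect to some principal connection on $P\to P/G$. Because $\bar\e$ is by definition the map induced by $\e$ on the quotient, one has $\e(p(t))=\bar\e(\bar p(t))=q(t)$, so $\gamma(t):=(\dot q(t),p(t))$ is a well-defined curve in $T_PQ$ satisfying $\tau_G\circ\gamma=\bar\gamma$; differentiating in $t$ gives $T\tau_G(\dot\gamma(t))=\dot{\bar\gamma}(t)$.

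The verification that $\gamma$ solves the upstairs EL equation is then a one-line chase using $\tau_G^*\bar{\Omega}^{\bar{L},\bar{\B}}=\Omega^{L,\B}$ and $\tau_G^*E_{\bar L}=E_L$. For any $W\in T_{\gamma(t)}(T_PQ)$,
\[
\Omega^{L,\B}(\dot\gamma,W)=\bar{\Omega}^{\bar{L},\bar{\B}}(T\tau_G\,\dot\gamma,T\tau_G W)=\bar{\Omega}^{\bar{L},\bar{\B}}(\dot{\bar\gamma},T\tau_G W)=-dE_{\bar L}(T\tau_G W)=-dE_L(W),
\]
so $i_{\dot\gamma(t)}\Omega^{L,\B}=-dE_L(\gamma(t))$, which is exactly \eqref{eq:EL} for the upstairs system, and by construction $\gamma$ projects to $\bar\gamma$.

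The only point worth flagging, rather than a genuine obstacle, is non-uniqueness of the lift: replacing $p(t)$ by $g(t)\cdot p(t)$ for any smooth $G$-valued curve $g(t)$ produces another lift, and the computation above shows it is still a solution upstairs. This gauge freedom reflects the fact that $\Omega^{L,\B}$ has non-trivial kernel along the $G$-orbit directions, and it is consistent with the remark preceding the proposition that the tangency condition of Proposition~\ref{prop:tangency} is trivially satisfied for the submersion $\tau_G$.
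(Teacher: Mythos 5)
Your proposal is correct. The paper does not spell out a proof of this proposition: it follows the route of Section~\ref{subset:pullback}, i.e.\ it regards $\tau_G$ as the map $f$ of that subsection, notes that the tangency condition ${X_{h_M}}_{|f(N)}\in Tf(TN)$ is vacuous because $\tau_G$ is a surjective submersion, and then appeals to the $f$-relatedness of (pre)symplectic solutions, with the two lemmas ($\tau_G^*\bar{\Omega}^{\bar L,\bar\B}=\Omega^{L,\B}$, $\tau_G^*E_{\bar L}=E_L$) as the key input. You use exactly the same two lemmas but work at the level of curves rather than vector fields: you lift the reduced solution curve $\bar p(t)$ horizontally through the principal bundle $P\to P/G$ and verify equation~\eqref{eq:EL} upstairs by the pull-back chase, using that $\tau_G\circ\gamma=\bar\gamma$ and hence $T\tau_G(\dot\gamma)=\dot{\bar\gamma}$. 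This is slightly more elementary and, in fact, a bit more robust: the vector-field argument of Section~\ref{subset:pullback} is phrased for a symplectic target with a genuine Hamiltonian vector field, whereas the reduced system here may only be presymplectic; your curve-level verification sidesteps any existence/uniqueness discussion downstairs and directly exhibits, for \emph{every} reduced solution, an unreduced solution projecting onto it. Your closing remark about gauge freedom is also accurate (the generators of the lifted $G$-action lie in $\ker\Omega^{L,\B}$ by invariance of $L$ and $\imath_{\xi_P}\B=0$), though note that the ``tangency condition'' invoked in the paper just before the proposition is the one of Section~\ref{subset:pullback}, not the one of Proposition~\ref{prop:tangency}; this is only a labelling slip, not a gap.
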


\paragraph{Example (Routh reduction).}  Clearly the Lagrangian $(i_{\mu}\circ\Pi_{\mu}^{-1})^{*}L-\Ac_{\mu}$ on $T_{Q}(Q/G)$ and the magnetic force term $d\Ac_{\mu}$ on $Q$ of the transformed Lagrangian system are $G_{\mu}$ invariant and fiberwise reducible to a magnetic Lagrangian system on $Q/G_{\mu}\to Q$. The reduced Lagrangian and magnetic 2-form correspond to  $\Ro^{\mu}$ and $\mathfrak{B}_\mu$ from Proposition~\ref{prop:normalrouth}.\\
\\
Throughout the previous sections, we have used the specific case of a standard Lagrangian system amenable to Routh reduction to demonstrate and develop the general theory on transformations between magnetic Lagrangian systems. In this section we conclude the main result of this paper: we summarize our statement that Routh reduction itself can be cast into the framework of the compatible transformations and we also consider the slightly more general framework of Routh reduction for magnetic Lagrangian systems (see~\cite{routhstages}). In both cases, Routh's reduction procedure for a `$G$-invariant'  magnetic Lagrangian system $(\e : P\to Q, L, \B)$ is realized as the result of two steps:
\begin{enumerate}[{\sc Step 1:}]
\item We construct an equivalent magnetic Lagrangian $(\e_\mu:P\to Q/G,L_{\mu},\B_\mu)$ system by means of a compatible transformation $\psi_{L,\beta}$ for a suitable $\beta$;
\item We check that $(\e_\mu:P\to Q/G,L_{\mu},\B_\mu)$ is fiberwise reducible.
\end{enumerate}
For a standard Lagrangian system $(Q\to Q,L,\B=0)$ amenable to Routh reduction, the procedure was demonstrated throughout the previous section.

\paragraph{Reduction for invariant magnetic Lagrangian systems.} Let $G$ act on $P\to Q$ by bundle automorphisms, i.e. there's a $G$-action on both $P$ and $Q$ such that $\e\circ\Phi^P_g=\Phi^Q_g\circ\e$. The projections of the principal bundles are denoted by $\pi^P:P\to P/G$ and $\pi^Q:Q\to Q/G$. This action naturally lifts to an action $\Phi^{T_PQ}$ on $T_PQ$ in the following way:
\[
\Phi^{T_PQ}_g(v_q,p):= (T\Phi^Q_g(v_q),\Phi^P_g(p)).
\]

\begin{definition} A magnetic Lagrangian system  $(\e:P\to Q,L,\B)$ is  $G$-invariant if $\B$ is invariant w.r.t. $\Phi^P$ and $L$ is invariant w.r.t. $\Phi^{T_PQ}$.\end{definition}
In this case, $\Phi^{T_PQ}$ is symplectic w.r.t. $\Omega^{L,\B}$ (recall the notations from Section~\ref{sec:background}). In order to obtain a momentum map for this action, we introduce the notion of $\B\lag$-potential.
\begin{definition}
Given an invariant closed 2-form $\B$ on $P$. Then a $\lag^*$-valued function $\delta$ on $P$ is a {\em $\B\lag$-potential} if $i_{\xi_{P}}\B = d\langle \delta, \xi \rangle$  for any $\xi \in \lag$.
\end{definition}
From now on and to ease notation, given a $\lag^*$-valued function $f$, $f_\xi$ for any $\xi\in\lag$ will be a shortcut for $\langle f,\xi\rangle$. For instance, the defining property of  a $\B\lag$-potential $\delta\in\mathcal{C}^\infty(P,\lag^*)$ is $i_{\xi_{P}}\B = d\delta_{\xi}$ for any $\xi\in\lag$. If $P$ is connected, we have
\begin{align*}
d\big[(\Phi^P_g)^* \delta_\xi\big] &= (\Phi^P_g)^* d\delta_\xi = (\Phi^P_g)^*(i_{\xi_{P}}\B)
=i_{(\Phi^P_g)^*\xi_{P}}(\Phi^P_g)^*\B\\
&=i_{(\Phi^P_g)^*\xi_{P}}\B=i_{(Ad_{g^{-1}}\xi)_P}\B=d\delta_{(Ad_{g^{-1}}\xi)}.
\end{align*}
From $d\big((\Phi^P_g)^* \delta_\xi - \delta_{Ad_g\xi}\big)=0$, it follows that the map $\sigma_\delta(g)=\delta\circ \Phi^{P}_g - Ad^{*}_{g^{-1}}\cdot \delta$ is a $\lag^*$-valued 1-cocycle on $G$. (This definition is independent of the point $p\in P$ because of connectedness). A momentum map $J_{L,\delta}$ for $\Phi^{T_PQ}$ is given by:
\[
\langle J_{L,\delta}(v_q,p), \xi\rangle  = \langle \F L(v_q,p), (\xi_Q(q),p)\rangle - \delta_\xi(p).
\]
This momentum map has non-equivariant cocycle $-\sigma_\delta$. We consider the affine action of $G$ on $\lag^*$ that makes $J_{L,\delta}$ equivariant (see for instance~\cite{Marsden}), and let $G_\mu$ denote the isotropy group of an element $\mu\in\lag^*$ w.r.t. this action. We will now prove that, under some regularity conditions, the level set of this momentum map may be identified with the subbundle $T_P(Q/G)\subset T_PQ$, and this identification will eventually allow us to define a suitable transformation scheme to describe Routh reduction on $T_PQ$.
\begin{definition} The Lagrangian $L$ of a $G$-invariant magnetic Lagrangian system is called $G$-regular if the map ${J_{L,\delta}}_{|(v_q,p)}: \lag \to \lag^*; \xi \mapsto J_{L,\delta}(v_q +\xi_Q(q),p) $ is a diffeomorphism for all $(v_q,p)\in T_PQ$.
\end{definition}
Similar to the standard case, we consider the map
\begin{align*}
\Pi_{\delta,\mu}:J^{-1}_{L,\delta}(\mu)&\to T_P(Q/G);\ (v_q,p) \mapsto (T\pi^Q(v_q),p).
\end{align*}
\begin{lemma}\label{prop:gregular} $\Pi_{\delta,\mu}$ is a diffeomorphism if the Lagrangian is $G$-regular.
\end{lemma}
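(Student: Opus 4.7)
The plan is to handle well-definedness, bijectivity, and smoothness of the inverse in turn. First, $T_P(Q/G)$ is the fiber product formed with respect to the composite fibration $\pi^Q\circ\e : P \to Q/G$, and for $(v_q,p)\in T_PQ$ one has $\e(p)=q$, so $\Pi_{\delta,\mu}(v_q,p) = (T\pi^Q(v_q),p)$ indeed lies in $T_P(Q/G)$; smoothness of $\Pi_{\delta,\mu}$ is obvious.

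For bijectivity, I would fix $(w_{[q]},p)\in T_P(Q/G)$ and set $q := \e(p)$. Since $T_q\pi^Q:T_qQ\to T_{[q]}(Q/G)$ is surjective there exists some $v^0_q \in T_qQ$ with $T\pi^Q(v^0_q) = w_{[q]}$, and because the $G$-action is free, $\ker T_q\pi^Q = \{\xi_Q(q) : \xi\in\lag\}$. Hence every $v_q\in T_qQ$ with $T\pi^Q(v_q)=w_{[q]}$ has the form $v_q = v^0_q + \xi_Q(q)$ for a unique $\xi\in\lag$, and the condition $J_{L,\delta}(v_q,p)=\mu$ reduces to solving $J_{L,\delta}(v^0_q + \xi_Q(q), p)=\mu$ for $\xi$, which by $G$-regularity admits exactly one solution. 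The construction is independent of the chosen lift $v^0_q$, since any two such choices differ by a vertical vector that is absorbed into the unique solution $\xi$.

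To exhibit a smooth inverse, I would fix a principal connection on the principal bundle $\pi^Q:Q\to Q/G$, whose horizontal lift provides a smooth map $h:T(Q/G)\times_{Q/G} Q \to TQ$ with $T\pi^Q\circ h = \mathrm{pr}_1$. I then define $\Psi:T_P(Q/G)\to J_{L,\delta}^{-1}(\mu)$ by $\Psi(w_{[q]},p) := (h(w_{[q]},\e(p)) + \xi(w_{[q]},p)_Q(\e(p)),\, p)$, where $\xi(w_{[q]},p)\in\lag$ is determined implicitly by $J_{L,\delta}(h(w_{[q]},\e(p)) + \xi_Q(\e(p)), p) = \mu$. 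By construction $\Pi_{\delta,\mu}\circ\Psi = \mathrm{id}$, and the uniqueness from the previous step gives $\Psi\circ\Pi_{\delta,\mu}=\mathrm{id}$.

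The main obstacle is ensuring the implicitly defined $\xi(w_{[q]},p)$ depends smoothly on its arguments, for only then is $\Psi$ smooth and the bijection a diffeomorphism. This is precisely where $G$-regularity enters as an analytic hypothesis rather than a mere bijectivity statement: the pointwise diffeomorphism property of $\eta \mapsto J_{L,\delta}(v_q + \eta_Q(q),p)$ forces its derivative in the $\lag$-direction to be everywhere invertible, which is exactly the hypothesis needed to apply the implicit function theorem uniformly along the solution locus.
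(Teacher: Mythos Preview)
Your argument is correct and follows essentially the same route as the paper: construct the inverse by choosing any lift of $w_{[q]}$ to $T_qQ$ and then use $G$-regularity to pin down the unique $\xi\in\lag$ landing in $J_{L,\delta}^{-1}(\mu)$, noting independence of the chosen lift. The paper's proof stops there, whereas you additionally supply the smoothness of the inverse via a principal connection (to obtain a smooth lift) and the implicit function theorem (to obtain a smooth $\xi$); this is a welcome refinement of a detail the paper leaves implicit.
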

\begin{proof} We define an inverse map $\Delta_{\delta,\mu}$ for $\Pi_{\delta,\mu}$. Choose an element $(v_{[q]_G},p)$ in $T_P(Q/G)$, and take a point $(v_q,p)\in T_PQ$ such that $T\pi^Q(v_q)=v_{[q]_G}$. Using $G$-regularity of $L$, there exists a unique $\xi$ in $\lag$ such that $J_{L,\delta}(v_q +\xi_Q(q),p) = \mu$. Set $\Delta_{\delta,\mu}(v_{[q]_G},p)=(v_q +\xi_Q(q),p)$. It remains to check that the construction is independent of the chosen point $(v_q,p)\in T_PQ$, but this is immediate.
\end{proof}
Consider now the $G_\mu$-action (on the second factor) on $T_P(Q/G)$, denoted by $\Psi^P_g$, i.e. $\Psi^P_g(v_{[q]_G},p)=(v_{[q]_G},\Phi^P_g(p))$ (this makes sense since $P/G_\mu$ fibers over $Q/G$). We know that $G_\mu$ acts on $J^{-1}_{L,\delta}(\mu)$, and it follows, for $g\in G_\mu$,
\begin{align*}
\Pi_{\delta,\mu}(T\Phi^{T_PQ}_g\cdot(v_q,p))&=\Pi_{\delta,\mu}(T\Phi^{TQ}_g(v_q),\Phi^P_g(p))=(T\pi^Q(T\Phi^{TQ}_g(v_q)),\Phi^P_g(p))\\
&=(T(\pi^Q\circ\Phi^Q_g)(v_q),\Phi^P_g(p))=(v_{[q]_G},\Phi^P_g(p))\\
&=\Psi^P_g\cdot\Pi_{\delta,\mu}(v_q,p),
\end{align*}
i.e. $\Pi_{\delta,\mu}(g\cdot(v_q,p))=g\cdot\Pi_{\delta,\mu}(v_q,p)$.  This implies equivariance for the inverse map $\Delta_{\delta,\mu}$.

\paragraph{The compatible transformation.} Analogous as the standard case, we consider the following transformation scheme: $P_{1}=P_{2}=P$, $Q_{1}=Q/G$, $Q_{2}=Q$ and transformation the pair $(F=id_P,f=\pi^Q=\pi : Q \to Q/G)$. We have $T_{P_{1}}Q_{1} = T_P(Q/G)$ and $T_{P_{2}}Q_{2}=T_PQ$, and $\pi$-regularity of $L$ is equivalent to $G$-regularity of $L$.

\begin{figure}[htb]
\begin{center}
\includegraphics{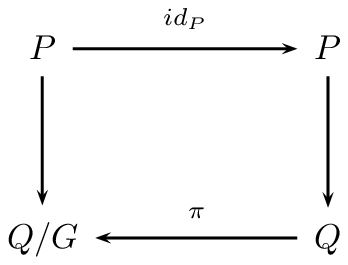}
\end{center}
\caption{Transformation scheme for $T_PQ$}
\end{figure}
We use the notations used in the compatible maps: we let coordinates on $Q/G$ be denoted $(q^{i})$, adapted coordinates on $Q$ are then $(q^{i},\bar q^{a})$ and finally $(q^{i},\bar q^{a},\bar p^\alpha)$ represent coordinates on $P$ (in particular, there are no components in $p^\gamma$).

The components of the infinitesimal generator of symmetries $\sigma :Q\times \lag \to TQ$ of $\Phi^Q$  are denoted by $\sigma^{a}_{b}$, i.e. $(q=(q^{i},\bar q^{a}),\xi=\xi_{b} e^{b})\mapsto \sigma^{a}_{b}(q^{i},\bar q^{a})\xi^{b}$, with $e_{b}$ a basis for $\lag$ (and $e^{b}$ denote the dual basis vectors). The action being free, $\sigma^{a}_{b}$ is invertible, $\Sigma^{a}_{b}:= (\sigma^{-1})^{a}_{b}$.

Define $\beta : P\to V^*\pi $ in the following way:
\[
\langle \beta(p), \xi_Q(q)\rangle  = \langle \mu, \xi\rangle + \langle \delta(p), \xi\rangle.
\]
In local coordinates, the map $\psi_{L,\beta}$ takes the form:
\[
\psi_{L,\beta} (q^{i},\bar q^{a},v^{i},\bar p^{\alpha}) = (q^{i},\bar q^{a},v^{i},\bar v^{a},\bar p^{\alpha}),
\]
with $\bar v^{\alpha}$ implicitly determined from
\[
\frac{\partial L}{\partial \bar v^{\alpha}} (q^{i},\bar q^{a},v^{i},\bar v^{a},\bar p^{\alpha}) = \beta_a=\Sigma^{b}_{a}\mu_b+\Sigma^{b}_{a}\delta_b,
\]
which is equivalent to the momentum equation and $\psi_{L,\beta}$ equals $\imath_\mu\circ\Pi_{\delta,\mu}^{-1}$. 

\paragraph{Verifying the tangency condition.}  Because the momentum map is conserved along solution to the EL equations, the tangency condition from~Proposition~\ref{prop:tangency} is fulfilled for any tangent vector that solves the (pre)-symplectic equation for the invariant Lagrangian system on $P\to Q$ at a point on the level set of the momentum map.   Here we will check Equation~\eqref{eq:master} for an arbitrary function $\beta$. More precisely, we study the tangency condition for a tangent vector $X_{s}$ to $T_{P}Q$ in the case of a compatible transformation map $\psi_{L,\beta}$ with $\beta$ arbitrary and where $X_{s=(v_{q},p)}$ solves the EL equation $i_{X_{s}} \Omega^{L,\B} = -dE_{L}$.

Because of the SODE nature EL equation, the tangent vector $X_{s}$ to $T_PQ$ is of the form:
\[
X_{s}= v^{i}\fpd{}{q^{i}} + \bar v^{a} \fpd{}{\bar q^{a}} + \dot p^{\alpha} \fpd{}{\bar p^\alpha} + \ddot q^{i}\fpd{}{v^{i}} + \ddot q^{a} \fpd{}{\bar v^{a}},
\]
where $(\ddot q, \dot p^{\alpha})$ are implicitly determined from the Euler Lagrange equations.
A tangent vector $Y_{\bar s=(T\pi(v_{q}),p)}$ to $T_{P}(Q/G)$ compatible to $X_{s}$ completely determined by this condition and is of the form
\[
Y= v^{i} \fpd{}{q^{i}}  + \bar v^a\fpd{}{\bar q^{a}} + \dot p^{\alpha} \fpd{}{\bar p^\alpha} + \ddot q^i \fpd{}{v^{i}}.
\]
From Proposition~\ref{prop:tangency}, $X_{s}$ is in the image of $T\psi_{L,\beta}$ if \[X_{s}\left(\fpd{L}{\bar v^{a}}\right) = Y_{\bar s} (\beta_{a}).\] Since $\beta$ is a function on $P$, the right hand side can be written (with a slight abuse of notation) as $X_{s}(\beta_{a})$, and the tangency condition becomes
\[X_{s}\left(\fpd{L}{\bar v^{a}}-\beta_{a}\right)=0.\]For a $G$-invariant Lagrangian, only $\beta_a=\Sigma^{b}_{a}\mu_b+\Sigma^{b}_{a}\delta_b$ will provide a transformation that satisfies the tangency conditions.

\paragraph{The reduction step.} Fix a principal connection $\mathcal{A}$ on the bundle $\pi: Q\rightarrow Q/G$ whose corresponding connection 1-form is denoted $\Ac$. We apply the construction of Theorem~\ref{thm:main} to induce the following magnetic Lagrangian system on $T_P(Q/G)$:
\begin{align*}
L_{\mu}&=(\imath_\mu\circ\Pi_{\delta,\mu}^{-1})^*L-\langle \mu+\delta,\Ac_P((\imath_\mu\circ\Pi_{\delta,\mu}^{-1})^{TQ})\rangle;\\
\B_\mu &= \B + d\langle \mu+\delta,\Ac_P \rangle.
\end{align*}
This new magnetic Lagrangian system $(P\to Q/G,L_\mu,\B_\mu)$ is $G_\mu$-fiberwise reducible because:
\begin{itemize}
\item$\Pi_{\delta,\mu}^{-1}$ is equivariant, $L$ is invariant, and the term involving $\Ac_P$ is $G_\mu$-invariant.
\item$\B_\mu$ is $G_\mu$ invariant and satisfies $\imath_{\xi_P} \B_\mu=0$ for all $\xi\in \lag_\mu$.
\end{itemize}
The last assertion can be checked using Cartan's formula and the fact that the infinitesimal 2-cocycle corresponding to $\sigma_\delta$ equals $\Sigma_\delta(\xi,\zeta)=-\langle T_e\sigma_\delta(\xi),\zeta\rangle=-\xi_P(\delta_\zeta(p))-\delta_{[\xi,\zeta]}(p)$. A detailed proof may be found in~\cite{routhstages,MarsdenHamRed}.

We conclude with a diagram (Figure~\ref{fig:Requiv}) that summarizes the equivalence of Routh reduction with the procedure described above: a transformation $\psi_{L,\beta}$ followed by a fiberwise reduction. The presymplectic structures on  $J_{L,\delta}^{-1}(\mu)$ and $T_P(Q/G)$ (the former given by $\imath_{\mu}^*\Omega^{L,\B}$ and latter given by Theorem~\ref{thm:main}) are related by $\Delta_{\mu,\delta}$. Finally, $\Delta_{\mu,\delta}$ drops to a symplectomorphism $\bar{\Delta}_{\mu,\delta}$ on the quotient.
\begin{figure}[htb]
\begin{center}
\includegraphics{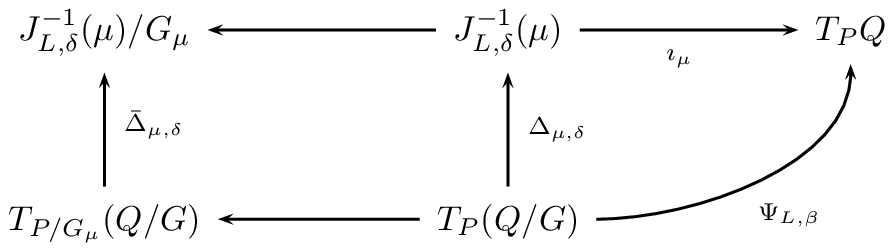}
\end{center}
\caption{Routh reduction scheme}\label{fig:Requiv}
\end{figure}

\section{The Hamiltonian picture}\label{sec:Ham} We end with a brief description of the Hamiltonian analogue of the transformations studied in the previous section. A full treatment of magnetic Hamiltonian systems as the Hamiltonian counterpart to magnetic Lagrangian systems is not the aim of this section, and will be addressed elsewhere. Accordingly we will only provide here the basic definition and properties.

\begin{definition} A magnetic Hamiltonian system is a triple $(\e:P\to Q, H, \B)$ where $\e:P\to Q$ is a fiber bundle, $H$ is a smooth function on the fiber product $T^*_PQ$ and $\B$ is a closed 2-form on $P$.
\end{definition}
There's a natural presymplectic form $\Omega$ defined on $T^*_PQ$, given by $\Omega:=\pi^*_1\omega_{Q} +\pi^*_2\B$. Given a magnetic Hamiltonian system, the dynamics associated to a magnetic Hamiltonian system are solutions to the Hamilton's equations w.r.t. the (pre)symplectic structure $\Omega$ and the Hamiltonian function $H$. This definition generalizes the standard definition of a Hamiltonian system on $T^*Q$ when considering $P=Q$.

\paragraph{Transformations} We will define a class of transformations $\psi_{\A,\beta}:T^*_{P_1}Q_1\to T^*_{P_2}Q_2$ analogous to the class $\psi_{L,\beta}$. Consider a transformation pair $(F,f)$ for the bundles $\e^{(1)}:P_1\to Q_1$ and $\e^{(2)}:P_2\to Q_2$ inducing adapted coordinates $(q^i,\bar q^a,\bar p^\alpha)$ on $P_2$ and $(q^i,\bar q^a,\bar p^\alpha,p^\gamma)$ on $P_1$ as in Section~\ref{sec:transf}. Corresponding coordinates in $T^*_{P_1}Q_1$ and $T^*_{P_2}Q_2$ are denoted $(q^i, \alpha^i,\bar q^a,\bar p^\alpha, p^\gamma)$ and $(q^i,\bar q^a,\alpha^i,\bar \alpha^a,\bar p^\alpha)$ respectively.

To determine the analogue of the transformation $\psi_{L_2,\beta}$ one begins with the following observation. The coordinate expression for the Lagrangian in Theorem~\ref{thm:main} induced by a transformation $\psi_{L_2,\beta}$ is
\[
L_1(q,\dot q,\bar q, \bar p,p) = \psi_{L_2,\beta}^*L_2(q,\bar q,\dot q,\dot {\bar q}, \bar p) - \beta_a(q,\bar q,\bar p,p) \big(\psi^a_{L_2,\beta}(q,\bar q,\dot q,\bar p,p) + \Gamma^a_i(q,\bar q)\dot q^i\big),
\]
where $\Gamma^a_i$ are the connection coefficients. A computation shows that the momenta $\alpha=\partial L_2/\partial \dot q$ transform under $\psi_{L_2,\beta}$ as $\alpha\mapsto \alpha + \langle\beta,\A\rangle$. More precisely, using the definition of $\psi_{L_2,\beta}$ one finds:
\begin{align*}
\psi_{L_2,\beta}^*\fpd{L_2}{{\dot{\bar q}}^a}=&\beta_a,\\
\psi_{L_2,\beta}^*\fpd{L_2}{\dot q^i}=&\fpd{L_1}{\dot q^i}-\big(\psi_{L_2,\beta}^*\fpd{L_2}{{\dot{\bar q}}^a}\big)\fpd{\psi^a}{\dot q^i}+\beta_a\big(\fpd{\psi^a}{\dot q^i}+\Gamma^a_i\big)=\fpd{L_1}{\dot q^i}-\beta_a\Gamma^a_i.
\end{align*}
Having the transformation law for the momenta, which depends on the chosen connection $\A$ and on the map $\beta$, one can naturally define a transformation $\psi_{\A,\beta}$ for a magnetic Hamiltonian systems on $P_2\to Q_2$ as the transformation which satisfies the aforementioned transformation law for the momenta (and covers $(F,f)$). The explicit expression is given by:
\[
\psi_{\beta,\A}(\alpha_{q_1},p_1)= \big(T_{q_2}^*f(\alpha_{q_1})+ \langle\beta,\A\rangle, F(p_1)\big),
\]
with $q_2=\epsilon^{(2)}(F(p_1))$. If one then defines the magnetic form $\B_1$ as
\[
\B_1= F^*\B_2 + d\left(\langle \beta , \A_{P_1}\rangle \right),
\]
one has the following result:
\begin{proposition} In the situation above, $\psi_{\A,\beta}^*\Omega_2=\Omega_1$.
\end{proposition}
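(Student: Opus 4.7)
The strategy is to pull back the two summands of $\Omega_2 = \pi_1^{(2)*}\omega_{Q_2} + \pi_2^{(2)*}\B_2$ through $\psi_{\A,\beta}$ separately, working intrinsically with the canonical 1-form $\theta_{Q_2}$ on $T^*Q_2$ (so that, in the sign convention used in~\eqref{eq:symform}, $\omega_{Q_2} = d\theta_{Q_2}$). The magnetic piece is immediate: the explicit formula for $\psi_{\A,\beta}$ gives $\pi_2^{(2)}\circ\psi_{\A,\beta} = F\circ\pi_2^{(1)}$, so $\psi_{\A,\beta}^*(\pi_2^{(2)*}\B_2) = \pi_2^{(1)*}F^*\B_2$.

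For the canonical piece, write $\psi_{\A,\beta} = (\lambda, F\circ\pi_2^{(1)})$ with $\lambda:T^*_{P_1}Q_1\to T^*Q_2$ given by $\lambda(\alpha_{q_1},p_1) = T^*_{q_2}f(\alpha_{q_1}) + \langle\beta(p_1),\A(q_2)\rangle$, where $q_2=\e^{(2)}(F(p_1))$. I would split $\lambda = \lambda_f + \lambda_\gamma$ into its ``cotangent-lift'' and ``translation'' summands. Both have common base projection $\tau_{Q_2}\circ\lambda = \e^{(2)}\circ F\circ\pi_2^{(1)}$, so a direct evaluation of $\theta_{Q_2}(X) = \langle\cdot, T\tau_{Q_2}X\rangle$ shows that $\lambda^*\theta_{Q_2} = \lambda_f^*\theta_{Q_2} + \lambda_\gamma^*\theta_{Q_2}$. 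For the first summand, the defining relation $\langle T^*f(\alpha),v\rangle = \langle\alpha,Tf(v)\rangle$ together with the transformation-pair identity $f\circ\e^{(2)}\circ F = \e^{(1)}$ collapses $\lambda_f^*\theta_{Q_2}$ to $\pi_1^{(1)*}\theta_{Q_1}$; this is just the standard ``the cotangent lift preserves the canonical 1-form'' statement transferred to our fiber-product setting. For the second summand, factor $\lambda_\gamma = \tilde\gamma\circ\pi_2^{(1)}$ with $\tilde\gamma:P_1\to T^*Q_2$, $p_1\mapsto\langle\beta(p_1),\A(q_2)\rangle$; unfolding the definitions of $\theta_{Q_2}$ and $\A_{P_1}$ gives $\tilde\gamma^*\theta_{Q_2} = \langle\beta,\A_{P_1}\rangle$, hence $\lambda_\gamma^*\theta_{Q_2} = \pi_2^{(1)*}\langle\beta,\A_{P_1}\rangle$.

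Applying $d$ and assembling the two pieces yields
\[
\psi_{\A,\beta}^*\Omega_2 \;=\; \pi_1^{(1)*}\omega_{Q_1} \;+\; d\,\pi_2^{(1)*}\langle\beta,\A_{P_1}\rangle \;+\; \pi_2^{(1)*}F^*\B_2 \;=\; \pi_1^{(1)*}\omega_{Q_1} + \pi_2^{(1)*}\B_1 \;=\; \Omega_1,
\]
where the middle equality is precisely the definition of $\B_1$. The only point requiring any care is the fiberwise-additivity $\lambda^*\theta_{Q_2} = \lambda_f^*\theta_{Q_2} + \lambda_\gamma^*\theta_{Q_2}$, which rests on the two summands sharing the same base projection into $Q_2$. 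A more computational alternative, which sidesteps this issue entirely, is to work in adapted coordinates: the map takes the explicit form $\psi_{\A,\beta}(q^i,\alpha^i,\bar q^a,\bar p^\alpha,p^\gamma) = (q^i,\bar q^a,\alpha^i+\beta_a\Gamma^a_i,\beta_a,\bar p^\alpha)$ with $\Gamma^a_i$ the coefficients of $\A$, and the claim then reduces to the trivial identity $d\bigl(\beta_a(d\bar q^a+\Gamma^a_i dq^i)\bigr) = d\beta_a\wedge d\bar q^a + d(\beta_a\Gamma^a_i)\wedge dq^i$ plus the coordinate expansion of~\eqref{eq:symform}.
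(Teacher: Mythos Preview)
Your proposal is correct. In fact the ``computational alternative'' you sketch at the very end is precisely the paper's proof: the paper simply writes down the coordinate expression $\psi_{\A,\beta}(q^i,\alpha^i,\bar q^a,\bar p^\alpha,p^\gamma)=(q^i,\bar q^a,\alpha_i+\beta_a\Gamma^a_i,\beta_a,\bar p^\alpha)$, pulls back $\Omega_2 = d\alpha_i\wedge dq^i + d\bar\alpha_a\wedge d\bar q^a + \B_2$, and observes that the middle term becomes $d(\beta_a\Gamma^a_i\,dq^i + \beta_a\,d\bar q^a) = d\langle\beta,\A_{P_1}\rangle$, which together with $F^*\B_2$ gives $\B_1$.

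Your main argument, by contrast, is intrinsic: you pull back the Liouville $1$-form rather than the symplectic $2$-form, splitting the $T^*Q_2$-component $\lambda$ of $\psi_{\A,\beta}$ into a ``cotangent-lift'' piece $\lambda_f$ and a ``translation'' piece $\lambda_\gamma$, and exploiting the fiberwise additivity $\lambda^*\theta_{Q_2}=\lambda_f^*\theta_{Q_2}+\lambda_\gamma^*\theta_{Q_2}$ (valid because both summands share the base projection $\e^{(2)}\circ F\circ\pi_2^{(1)}$). This route is more conceptual: it makes transparent that $\pi_1^{(1)*}\omega_{Q_1}$ comes from $\lambda_f$ via the transformation-pair identity $f\circ\e^{(2)}\circ F=\e^{(1)}$, while the exact correction $d\langle\beta,\A_{P_1}\rangle$ comes from $\lambda_\gamma$ via the very definition of $\A_{P_1}$. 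The paper's coordinate proof is shorter and sidesteps the additivity lemma, but yours explains \emph{why} the formula for $\B_1$ is the natural one and would carry over unchanged to settings where adapted coordinates are less convenient.
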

\begin{proof} A point with coordinates $(q^i, \alpha^i,\bar q^a,\bar p^\alpha, p^\gamma)\in T_{P_1}Q_1$ is mapped into the point $(q^i,\bar q^a,\alpha_i+\beta_a\Gamma^a_i,\beta_a,\bar p^\alpha)\in T_{P_2}Q_2$ by $\psi_{\A,\beta}$. Using $\Omega_2=\big(d\alpha_i\wedge dq^i + d\bar\alpha_a\wedge d\bar q^a +\B_2\big)$, it follows easily
\begin{align*}
({\psi_{\A,\beta}})^*\Omega_2 &= d\alpha_i\wedge dq^i + d(\beta_a\Gamma^a_idq^i + \beta_ad\bar q^a)+F^*\B_2\\
&= d\alpha_i\wedge dq^i + \B_1=\Omega_1.
\end{align*}
\end{proof}
Starting from this result one defines the {\sl induced} magnetic Hamiltonian system on $T^*_{P_1}Q_1$, denoted $(\e^{(1)},H_1,\B_1)$, whose Hamiltonian function is given by
\[H_1(\alpha_{q_1},p_1)=\psi_{\A,\beta}^*H_2(\alpha_{q_2},p_2).
\]
As in the case of magnetic Lagrangian systems (see Theorem~\ref{thm:main}), one can then relate the dynamics of $(\e^{(2)},H_2,\B_2)$ to that of $(\e^{(1)},H_1,\B_1)$. We conclude this paragraph with an application of these transformations to a well known example from mechanics.

\vspace{\topsep}
\paragraph{Example (Momentum shift in cotangent bundle reduction).} We consider the standard setting for cotangent bundle reduction, namely a Hamiltonian system on $T^*Q$ and a $G$ action on $T^*Q$ by cotangent lifts with associated momentum map defined as:
\[
\langle J(\alpha_q),\xi \rangle = \langle \alpha_q,\xi_Q\rangle, \mbox{ for all } \xi\in\lag.
\]
The reduced space $(T^*Q)_\mu=J^{-1}(\mu)/G_\mu$ is usually realized by choosing an arbitrary principal connection $\Ac$ on $\pi:Q\to Q/G$ and making use of the so-called {\sl momentum shift}. The momentum shift map $S_\mu:J^{-1}(\mu)\to J^{-1}(0)$ is defined in terms of the connection as $S_\mu(\alpha_q)=\alpha_q-\Ac_\mu$. The target space $J^{-1}(0)$ is naturally identified with $T^*_Q(Q/G)$, and this allows for an easy reduction.
\begin{figure}[thb]
\begin{center}
\includegraphics{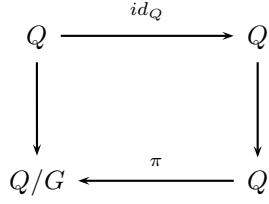}
\end{center}
\caption{Transformation scheme for $S_\mu$}\label{fig:diagramT^*Q}
\end{figure}

\noindent Consider the following scheme, which is the same as in Routh reduction: $Q_1=Q/G$, $Q_2=Q$, $P_1=Q$, $P_2=Q$ and the transformation pair $(F,f)=(id_Q,\pi)$. The situation is summarized in Figure~\ref{fig:diagramT^*Q}. The map $\beta$ is given by
\[
\langle \beta(q), \xi_Q(q)\rangle  = \langle \mu, \xi\rangle, \mbox{ for all } \xi\in\lag.
\]
It is easy to check that the map $\psi_{\A,\beta}:T^*_Q(Q/G)\to T^*Q,\;\alpha\mapsto\pi^*\alpha+\Ac_\mu$ equals $\imath_\mu\circ S_\mu^{-1}$ (where $\imath_\mu$ denotes the inclusion) and induces the magnetic Hamiltonian system on $T^*_Q(Q/G)$ whose Hamiltonian function and magnetic term are $(\imath_\mu\circ S_\mu^{-1})^*H$ and $d\Ac_\mu$ respectively. The situation is summarized in Figure~\ref{fig:MS}.
\begin{figure}[thb]
\begin{center}
\includegraphics{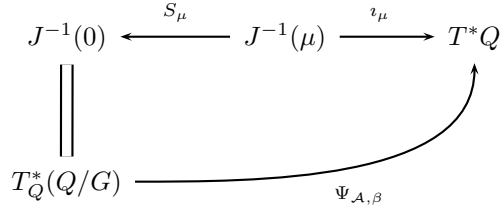}
\end{center}
\caption{Momentum shift}\label{fig:MS}
\end{figure}

\section*{Acknowledgments} BL is an honorary postdoctoral researcher at the Department of Mathematics of Ghent University. This work is sponsored by a Research Programme of the Research Foundation -- Flanders (FWO). This work is part of the {\sc irses} project {\sc
geomech} (nr.\ 246981) within the 7th European Community Framework Programme.

\end{document}